\newtheorem{theo}[]{{\emph{Theorem}}}
\newtheorem{lemma}[]{{\emph{Lemma}}}
\theoremstyle{remark}
\newtheorem*{remark}{\textbf{Remark}}
\theoremstyle{definition}
\newcommand{\ra}{\rightarrow}
\newcommand{\calF}{\mathcal{F}} 
\newcommand{\bF}{\mathbb{F}}
\newcommand{\cC}{\mathcal{C}}
\newcommand{\al}{\alpha}
\newcommand{\be}{\beta}
\newcommand{\ga}{\gamma}
\newcommand{\ka}{\kappa}
\newcommand{\veps}{\varepsilon}
\newcommand{\om}{\omega}
\DeclareMathOperator{\Tra}{\mathrm Tr}
\begin{document}

\title{Cyclic Codes and Sequences: the Generalized Kasami Case} \maketitle

\begin{center}
$\mathrm{Jinquan\;\;Luo\qquad\quad Yuansheng \;\;Tang \quad\qquad
and\qquad Hongyu\;\; Wang}$ \footnotetext{The authors are with the
School of Mathematics, Yangzhou University, Jiangsu Province,
225009, China
\par J.Luo is also with the Division of Mathematics, School
of Physics and Mathematical Sciences, Nanyang Technological
University, Singapore.
\par \quad E-mail addresses: jqluo@ntu.edu.sg, ystang@yzu.edu.cn, hywang@yzu.edu.cn.}
\end{center}
\newpage
 \textbf{Abstract} \par Let $q=2^n$ with $n=2m$ .
Let $1\leq k\leq n-1$ and $k\neq m$. In this paper we determine
the value distribution of following exponential sums
\[\sum\limits_{x\in
\bF_q}(-1)^{\Tra_1^m (\alpha x^{2^{m}+1})+\Tra_1^n(\beta
x^{2^k+1})}\quad(\alpha\in \bF_{2^m},\beta\in \bF_{q})\]
 and
\[\sum\limits_{x\in
\bF_q}(-1)^{\Tra_1^m (\alpha x^{2^{m}+1})+\Tra_1^n(\beta
x^{2^k+1}+\ga x)}\quad(\alpha\in \bF_{2^m},\beta,\ga\in \bF_{q})\]

 where $\Tra_1^n: \bF_q\ra \bF_2$ and $\Tra_1^m: \bF_{p^m}\ra\bF_2$ are the canonical trace
mappings. As applications:
 \begin{itemize}
    \item[(1).]  We determine the weight distribution
of the binary cyclic codes $\cC_1$ and $\cC_2$ with parity-check
polynomials $h_2(x)h_3(x)$ and $h_1(x)h_2(x)h_3(x)$ respectively
where $h_1(x)$, $h_2(x)$ and $h_3(x)$ are the minimal polynomials
of $\pi^{-1}$, $\pi^{-(2^k+1)}$ and $\pi^{-(2^m+1)}$ over
$\bF_{2}$ respectively for a primitive element $\pi$ of $\bF_q$.
    \item[(2).]We determine the correlation distribution among
    a family of m-sequences. \end{itemize}
 This paper is the binary version of Luo, Tang and Wang\cite{Luo Tan} and
extends the results in Kasami\cite{Kasa1}, Van der Vlugt\cite{Vand2}
and Zeng, Liu and Hu\cite{Zen Liu}.

\emph{Index terms:}\;Exponential sum, Cyclic code, Quadratic form,
Weight distribution, Correlation distribution
\newpage
\section{Introduction}

\quad Basic results on finite fields could be found in \cite{Lid
Nie}. The following standard notations are fixed throughout this
paper except for specific statements.
\begin{itemize}
  \item Let $n$ be an even integer, $m=n/2$ and $q=2^n$.
  \item Let $\bF_{2^l}$ be the finite field of order $2^l$, $\bF_{2^l}^*$ be the set consisting of nonzero element and
   $\Tra_i^j:\bF_{p^i}\ra\bF_{p^j}$ be the trace mapping for $i\mid j$.
  \item Let $k$ be a positive integer, $1\leq k\leq n-1$ and  $k\neq m$.
  Let $d=\gcd(m,k)$ and $d'=\gcd(m+k,2k)$, $q_0=2^d$, and $s=n/d$.
  \item  Let $\pi$ be a primitive element of $\bF_q^*$.
\end{itemize}

 For binary cyclic code $\cC$ with length $l$, let $A_i$ be the
number of codewords in $\cC$ with Hamming weight $i$. The weight
distribution $\{A_0,A_1,\cdots,A_l\}$ is an important research
object for both theoretical and application interests in coding
theory. Classical coding theory reveals that the weight of each
codeword can be expressed by binary exponential sums so that the
weight distribution of $\cC$ can be determined if the corresponding
exponential sums can be calculated explicitly (Kasami \cite{Kasa1},
\cite{Kasa2}, \cite{Kasa3},  van der Vlugt \cite{Vand2}, \cite{Zen
Liu}).

In general, let $q=2^n$, $\cC$ be the binary cyclic code with length
$l=q-1$ and parity-check polynomial
\[h(x)=h_1(x)\cdots h_u(x)\quad (u\geq 1)\]
where $h_i(x)$ $(1\leq i\leq u)$ are distinct irreducible
polynomials in $\bF_{2}[x]$ with the same degree $e_i$ $(1\leq i\leq
u)$, then $\mathrm{dim}_{\bF_{2}}\cC=\sum\limits_{i=1}^{u}e_i$ .Let
$\pi^{-s_i}$ be a zero of $h_i(x)$, $1\leq s_i\leq q-2$ $(1\leq
i\leq u).$ Then the codewords in $\cC$ can be expressed by
\[c(\alpha_1,\cdots,\alpha_u)=(c_0,c_1,\cdots,c_{l-1})\quad (\alpha_1,\cdots,\alpha_u\in \bF_q)\]
where
$c_i=\sum\limits_{\lambda=1}^{u}\Tra^n_{1}(\alpha_{\lambda}\pi^{is_{\lambda}})$
$(0\leq i\leq l-1)$. Therefore the Hamming weight of the codeword
$c=c(\alpha_1,\cdots,\alpha_u)$ is {\setlength\arraycolsep{2pt}
\begin{eqnarray} \label{Wei}
w_H\left(c\right)&=& \#\left\{i\left|0\leq i\leq l-1,c_i\neq
0\right.\right\}
\nonumber\\[1mm]
&=& l-\#\left\{i\left|0\leq i\leq l-1,c_i=0\right.\right\}
\nonumber\\[1mm]
&=&
l-\frac{1}{2}\,\sum\limits_{i=0}^{l-1}\sum\limits_{a=0}^{1}(-1)^{a\cdot\Tra_{1}^n\left(\sum\limits_{\lambda=1}^{u}\alpha_{\lambda}\pi^{is_{\lambda}}\right)}
\nonumber\\[1mm]
&=&l-\frac{l}{2}-\frac{1}{2}\,\sum\limits_{x\in
\bF_q^*}(-1)^{\Tra_1^n(f(x))} \nonumber
\\[1mm]
&=&\frac{l}{2}+\frac{1}{2}-\frac{1}{2}\,S(\alpha_1,\cdots,\alpha_u)\nonumber\\[1mm]
&=&2^{n-1}-\frac{1}{2}\,S(\alpha_1,\cdots,\alpha_u)
\end{eqnarray}
} where
$f(x)=\alpha_1x^{s_1}+\alpha_2x^{s_2}+\cdots+\alpha_ux^{s_u}\in
\bF_{q}[x]$, $\bF_q^*=\bF_q\backslash\{0\}$,  and
\[S(\alpha_1,\cdots,\alpha_u)=\sum\limits_{x\in \bF_q}(-1)^{\Tra_1^n(\alpha_1x^{s_1}+\cdots+\alpha_ux^{s_u})}.\]
In this way, the weight distribution of cyclic code $\cC$ can be
derived from the explicit evaluating of the exponential sums
\[S(\alpha_1,\cdots,\alpha_u)\quad(\alpha_1,\cdots,\alpha_u\in \bF_q).\]

 Let $h_1(x)$, $h_2(x)$ and $h_{3}(x)$
be the minimal polynomials of $\pi^{-1},\pi^{-(2^k+1)}$ and
$\pi^{-{(2^m+1)}}$ over $\bF_{2}$ respectively. Then
\begin{equation}\label{deg}
\mathrm{deg}\,h_i(x)=n\; \text{for}\; i=1,2\;\text{and}\;
\mathrm{deg}\,h_3(x)=m
\end{equation}

 Let
$\cC_1$ and $\cC_2$ be the binary cyclic codes  with length $l=q-1$
and parity-check polynomials $h_2(x)h_3(x)$ and $h_1(x)h_2(x)h_3(x)$
respectively. It is a consequence that $\cC_1$($\cC_2$, resp.) is
the dual of the binary BCH code with designed distance $5$ ($7$,
resp.) whose zeroes include $\pi^{2^m+1}$ and $\pi^{2^k+1}$
($\pi^{2^m+1}$ ,$\pi^{2^k+1}$ and $\pi$, resp.). From (\ref{deg}),
we know that the dimensions of $\cC_1$ and $\cC_2$ over $\bF_{p^t}$
are $3n/2$ and $5n/2$ respectively.

For $\al\in \bF_{2^m},(\be,\ga)\in \bF_q^2$,  define the
exponential sums
\begin{equation}\label{def T}
T(\al,\be)=\sum\limits_{x\in \bF_q}(-1)^{\Tra_1^m (\alpha
x^{2^{m}+1})+\Tra_1^n(\beta x^{2^k+1})}
\end{equation}
and
\begin{equation}\label{def S}
S(\al,\be,\ga)=\sum\limits_{x\in \bF_q}(-1)^{\Tra_1^m (\alpha
x^{2^{m}+1})+\Tra_1^n(\beta x^{2^k+1}+\ga x)}.
\end{equation}
 Then the complete
weight distributions of $\cC_1$ and $\cC_2$ can be derived from the
explicit evaluation of $T(\al,\be)$ and $S(\al,\be,\ga)$.

 Another application of binary exponential sums is to obtain the cross correlation and auto-correlation distribution
 among binary sequences.
 Let $\mathcal{F}$ be a collection of binary
m-sequences of period $q-1$ defined by

\[\mathcal{F}=\left\{\left\{a_i(t)\right\}_{i=0}^{q-2}|\,0\leq i\leq L-1 \right\}\]

The \emph{correlation function} of $a_i$ and $a_j$ for a shift
$\tau$ is defined by

\[M_{{i},{j}}(\tau)=\sum\limits_{\lambda=0}^{q-2}(-1)^{a_i({\lambda})-a_j({\lambda+\tau})}\hspace{2cm}(0\leq \tau\leq
q-2).\]

Binary sequences with low cross correlation and auto-correlation are
widely used in Code Division Multiple Access(CDMA) spread
spectrum(see Paterson\cite{Part},  Simon, Omura and Scholtz\cite{Sim
Omu}). Pairs of binary m-sequences with few-valued auto and cross
correlations have been extensively studied for several decades, see
Canteaut, Charpin and Dobbertin \cite{Bluh}, Cusick and Dobbertin
\cite{Cus}, Ding, Helleseth and Lam\cite{Din Hel1}, Ding, Helleseth
and Martinsen\cite{Din Hel2}, Dobbertin, Felke, Helleseth and
Rosendahl \cite{Dob Fel},  Gold \cite{Gold}, Helleseth
\cite{Hell1},\cite{Hell2}, Helleseth, Kholosha and Ness\cite{Hel
Kho}, Helleseth and Kumar \cite{Hel Kum}, Hollmann and Xiang
\cite{Hol Xia}, Ness and Helleseth\cite{Nes Hel}, Niho\cite{Niho},
Rosendahl\cite{Rose}, Yu and Gong\cite{Yu Gon1}-\cite{Yu Gon2} and
references therein.

Define the collection of sequences

\[
  \calF_1=\left\{a_{\al,\be}=\left(a_{\al,\be}(\pi^{\lambda})\right)_{\lambda=0}^{q-2}\big{|}\,\al\in \bF_{2^m}, \be\in \bF_{q} \right\}
\]
where
  $a_{\al,\be}(\pi^{\lambda})=\Tra_1^m(\al \pi^{\lambda(2^m+1)})+\Tra_1^n(\be
  \pi^{\lambda(2^k+1)}+\pi^{\lambda})$.

 If $m/d$ or $k/d$ is even, define
\[
  \calF_2=\left\{a_{\be}=\left(a_{\be}(\pi^{\lambda})\right)_{\lambda=0}^{q-2}\big{|}\,\be=\pi^{i} \,\text{for}\;0\leq i\leq 2^m-2 \right\}
\]
where
  $a_{\be}(\pi^{\lambda})=\Tra_1^m(\pi^{\lambda(2^m+1)})+\Tra_1^n(\be
  \pi^{\lambda(2^k+1)})$.

If $k/d$ is even, define $\calF_3=\left\{a\right\} $ with
  $a=\left(\Tra_1^n(\pi^{\lambda(2^k+1)})\right)_{\lambda=0}^{q-2}$.

In this correspondence we will study the following collection of
m-sequences with period $q-1$
\begin{equation}\label{def F}
\calF=\left\{
\begin{array}{ll}
\calF_1, &\text{if}\; m/d\;\text{and}\; k/d\;\text{are both odd}\\[1mm]
\calF_1\cup \calF_2, &\text{if}\; m/d\;\text{is even}\\[1mm]
\calF_1\cup \calF_2\cup \calF_3, &\text{if}\;  k/d\;\text{is even.}\\[1mm]
\end{array}
\right.
\end{equation}

It is easy to verify that the sequences in $\calF$ are all cyclic
inequivalent and maximal with size $q-1$. Several particular cases
of the cyclic code $\cC_2$ or the related sequences collection
$\calF$ have been investigated, for instance:
\begin{itemize}
  \item The binary code $\cC_2$ with $k=m\pm 1$ is nothing but the
  classical Kasami code, see Kasami \cite{Kasa1}.
  \item As for the binary code $\cC_2$ with $k=1$, its minimal
  distance is obtained by Lahtonen \cite{Laht}, Moreno and Kumar
  \cite{Mor Kum}. Its weight distribution is determined eventually
  in van der Vlugt \cite{Vand2}.
  \item For the case $(k,n)=2$ if $m$ is odd, or $(k,n)=1$ if $m$ is even, the binary code $\cC_2$
  and the related family of generalized Kasami sequences have been
  studied, see Zeng, Liu and Hu \cite{Zen Liu}.
  \item In the case $p$ odd prime and $\gcd(m, k)=\gcd(m+k,2k)=d$
  being odd, the weight distribution of $\cC_2$ and correlation
  distribution of corresponding sequences have been fully
  determined, see Zeng, Li and Hu \cite{Zen Li}.
\end{itemize}

 This paper is presented as follows. In Section 2 we introduce
some preliminaries and give auxiliary results. In Section 3 we
will give the value distribution of $T(\al,\be)$ for $\al\in
\bF_{2^m},\be\in \bF_q$ and the weight distribution of $\cC_1$. In
Section 3 we will determine the value distribution of
$S(\al,\be,\ga)$,  the correlation distribution among the
sequences in $\calF$, and then the weight distribution of $\cC_2$.
Most proofs of lemmas and theorems are presented in several
appendices. The main tools are quadratic form theory over finite
fields of characteristic 2, some moment identities on
$T(\alpha,\beta)$ and a class of Artin-Schreier curves.

\section{Preliminaries}

\quad We follow the notations in Section 1. The first machinery to
determine the values of exponential sums $T(\alpha,\beta)$
$(\alpha\in \bF_{p^m},\beta\in \bF_q)$ defined in (\ref{def T}) is
quadratic form theory over $\bF_{q_0}$.

 Let $H$
be an $s\times s$ matrix over $\bF_{q_0}$. For the quadratic form
\begin{equation}\label{qua for}
F:\bF_{q_0}^s\ra \bF_{q_0},\quad F(x)=XHX^T\quad
(X=(x_1,\cdots,x_s)\in \bF_{q_0}^s),
\end{equation}
define $r_F$ of $F$ to be the rank of the skew-symmetric matrix
$H+H^T$. Then $r_F$ is even.

\begin{lemma}\label{qua}

For the quadratic form $F(X)=XHX^T$ defined in (\ref{qua for}),

\[\sum\limits_{X\in\bF_{q_0}^s}\zeta_p^{\Tra_1^{d}(F(X))}=\pm
q_0^{s-\frac{r_F}{2}}\; \text{or}\; 0\] Moreover, if $r_F=s$, then
\[\sum\limits_{X\in\bF_{q_0}^s}\zeta_p^{\Tra_1^{d}(F(X))}=\pm
q_0^{\frac{s}{2}}\]
\end{lemma}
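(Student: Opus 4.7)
The plan is to reduce to the classical Gauss-sum estimate for a non-degenerate quadratic form in characteristic~$2$, via the standard radical decomposition.

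\medskip

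First I would introduce the associated symmetric bilinear form
\[
B(X,Y)=F(X+Y)+F(X)+F(Y),
\]
whose Gram matrix is $H+H^{T}$; thus the radical
$R=\{Y\in\bF_{q_0}^{s}\mid B(Y,\cdot)=0\}$ has $\bF_{q_0}$-dimension $s-r_{F}$. A quick check shows that for $Y\in R$ one has $F(X+Y)=F(X)+F(Y)$, so $F|_{R}$ is additive. Pick any $\bF_{q_0}$-complement $V$ of $R$, and write every element of $\bF_{q_0}^{s}$ uniquely as $X+Y$ with $X\in V$, $Y\in R$. Since $F(X+Y)=F(X)+F(Y)$, the character sum factors:
\[
\sum_{Z\in\bF_{q_0}^{s}}\zeta_{p}^{\Tra_{1}^{d}(F(Z))}
=\Bigl(\sum_{X\in V}\zeta_{p}^{\Tra_{1}^{d}(F(X))}\Bigr)\,
\Bigl(\sum_{Y\in R}\zeta_{p}^{\Tra_{1}^{d}(F(Y))}\Bigr).
\]

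\medskip

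Second, I would analyze the radical factor. Because $F|_{R}$ is $\bF_{2}$-additive, the map $Y\mapsto\Tra_{1}^{d}(F(Y))$ is an $\bF_{2}$-linear functional on the $\bF_{2}$-space $R$. Hence this factor equals $|R|=q_{0}^{s-r_{F}}$ if the functional vanishes, and $0$ otherwise. This produces the dichotomy in the statement.

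\medskip

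Third, for the non-degenerate factor, $F$ restricted to $V$ is a non-degenerate quadratic form in $r_{F}$ variables over $\bF_{q_0}$. I would invoke the classification of such forms in characteristic~$2$ (see Lidl--Niederreiter, Chapter~6): $r_{F}$ is automatically even, and $F|_{V}$ is $\bF_{q_0}$-equivalent either to the hyperbolic form $\sum_{i=1}^{r_{F}/2}x_{2i-1}x_{2i}$ or to $x_{1}^{2}+x_{1}x_{2}+\delta x_{2}^{2}+\sum_{i=2}^{r_{F}/2}x_{2i-1}x_{2i}$ with $\Tra_{1}^{d}(\delta)=1$. A direct Gauss-sum evaluation on each canonical form, using the identity $\sum_{u\in\bF_{q_0}}\zeta_{p}^{\Tra_{1}^{d}(uv)}=q_{0}\cdot\mathbf{1}_{v=0}$, yields
\[
\sum_{X\in V}\zeta_{p}^{\Tra_{1}^{d}(F(X))}=\pm q_{0}^{r_{F}/2}.
\]

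\medskip

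Multiplying the two factors gives $\pm q_{0}^{s-r_{F}/2}$ or $0$, and when $r_{F}=s$ the radical is trivial so the $0$ possibility disappears, proving the final assertion. The only subtle point is the last step: diagonalizing a quadratic form in characteristic~$2$ is more delicate than in odd characteristic because one must accommodate the anisotropic plane, but this is classical and I would simply cite it rather than redevelop the normal-form theorem.
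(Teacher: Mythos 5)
Your argument is correct, but it follows a genuinely different route from the paper's. The paper never decomposes the space: it squares the character sum, obtaining
\[
\Bigl(\sum_{X\in\bF_{q_0}^s}(-1)^{\Tra_1^d(F(X))}\Bigr)^2
=\sum_{Z\in\mathcal{Z}}(-1)^{\Tra_1^d(ZHZ^T)}\cdot q_0^{s},
\qquad \mathcal{Z}=\{Z\mid Z(H+H^T)=0\},
\]
after noting that the inner sum over $X$ kills every $Z$ outside the radical. It then observes that $Z\mapsto\Tra_1^d(ZHZ^T)$ is a group homomorphism $\mathcal{Z}\to\bF_2$, so the remaining sum is either $|\mathcal{Z}|=q_0^{s-r_F}$ (homomorphism trivial) or $0$ (homomorphism surjective), which gives $\pm q_0^{s-r_F/2}$ or $0$ directly; when $r_F=s$ the radical is trivial and the second alternative cannot occur. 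This is entirely self-contained and needs no structure theory. Your proof instead splits $\bF_{q_0}^s=V\oplus R$, factors the sum as (nondegenerate part)$\times$(radical part), and invokes the characteristic-$2$ classification of nondegenerate quadratic forms to evaluate the first factor as $\pm q_0^{r_F/2}$. The two vanishing criteria agree -- your linear functional $\Tra_1^d\circ F|_R$ on $R$ is exactly the paper's homomorphism $\eta$ on $\mathcal{Z}$ -- and all your intermediate claims check out ($F|_R$ is indeed additive since $B$ vanishes on $R$, and $B$ restricted to a complement $V$ is indeed nondegenerate). What your approach buys is more structural information: the sign is identified with the type (hyperbolic versus elliptic) of $F|_V$, which the squaring trick cannot see. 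What it costs is the dependence on the normal-form theorem in characteristic $2$, which the paper deliberately avoids; for the purposes of this lemma, where only $\pm q_0^{s-r_F/2}$ or $0$ is claimed, the squaring argument is the more economical choice.
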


\begin{proof}
We can calculate
\[{\setlength\arraycolsep{2pt}
\begin{array}{ll}&\left(\sum\limits_{X\in\bF_{q_0}^s}(-1)^{\Tra_{1}^{d}(F(X))}\right)^2
              =\sum\limits_{X,Z\in
\bF_{q_0}^s}(-1)^{\Tra_{1}^{d}\left(XHX^T+(X+Z)H(X+Z)^T\right)}\\[4mm]
              &\qquad\qquad=\sum\limits_{Z\in
\bF_{q_0}^s}(-1)^{\Tra_{1}^{d}(ZHZ^T)}\sum\limits_{X\in
\bF_{q_0}^s}(-1)^{\Tra_{1}^{d}\left(Z(H+H^T)X^T\right)}
\end{array}
}
\]
The inner sum is zero unless $Z(H+H^T)=0$. Define
\[\mathcal{Z}=\left\{Z\in \bF_{q_0}^s\,|\, Z(H+H^T)=0\right\}.\]
Then the map
\begin{equation}\label{map}
\begin{array}{rcl}
\eta:\mathcal{Z}&\longrightarrow& \bF_2\\[2mm]
Z&\mapsto& \Tra_1^d(ZHZ^T)
\end{array}
\end{equation}
is an additive group homomorphism.

If $\eta$ is surjective, then there are exactly one half $z\in
\mathcal{Z}$ mapping to $0$ and $1$ respectively. Hence
$\sum\limits_{X\in\bF_{q_0}^s}(-1)^{\Tra_{1}^{d}(F(X))}=0$.
Otherwise $\mathrm{Im}(\eta)=\left\{0\right\}$ and
$\left(\sum\limits_{X\in\bF_{q_0}^s}(-1)^{\Tra_{1}^{d}(F(X))}\right)^2=
q_0^{s}\cdot q_0^{s-r_F}$. Hence
$\sum\limits_{X\in\bF_{q_0}^s}(-1)^{\Tra_{1}^{d}(F(X))}=\pm
q_0^{s-\frac{r_F}{2}}$.

If $r_F=s$, then $\mathcal{Z}=\{0\}$ and
$\mathrm{Im}(\eta)=\left\{0\right\}$. Therefore
$\sum\limits_{X\in\bF_{q_0}^s}(-1)^{\Tra_{1}^{d}(F(X))}=\pm
q_0^{\frac{s}{2}}$.

\end{proof}

The following result, which has been proven in \cite{Lid Nie},
Chap. 6, will be used in Section 4.
\begin{lemma}\label{det gamma}
For the fixed quadratic form defined in  (\ref{qua for}), the value
distribution of
$\sum\limits_{X\in\bF_{q_0}^s}(-1)^{\Tra_1^{d}(F(X)+AX^T)}$ when $A$
runs through $\bF_{q_0}^s$ is shown as following
\[
\begin{array}{ccc}
value & \qquad\qquad\qquad&multiplicity \\[2mm]
0&\qquad\qquad\qquad&q_0^s-q_0^{r_F}\\[2mm]
q_0^{s-\frac{r_F}{2}}&\qquad\qquad\qquad &\frac{1}{2}(q_0^{r_F}+q_0^{\frac{r_F}{2}})\\[2mm]
-q_0^{s-\frac{r_F}{2}}&\qquad\qquad\qquad &\frac{1}{2}(q_0^{r_F}-q_0^{\frac{r_F}{2}})\\[2mm]
\end{array}
\]
\end{lemma}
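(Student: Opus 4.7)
\emph{Proof plan for Lemma~\ref{det gamma}.} The plan is to combine a quadratic-form argument, which pins down the three possible values of $g(A) := \sum_{X\in\bF_{q_0}^s}(-1)^{\Tra_1^{d}(F(X)+AX^T)}$, with the first two moments $\sum_A g(A)$ and $\sum_A g(A)^2$; the three multiplicities $N_0,N_+,N_-$ will then drop out of a $3\times 3$ linear system.

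First I will establish that $g(A)\in\{0,\pm q_0^{s-r_F/2}\}$ for every $A$. Expanding $g(A)^2$ and substituting $Y=X+Z$ (legal since we are in characteristic $2$) gives
\[ g(A)^2 = \sum_{X,Z\in\bF_{q_0}^s}(-1)^{\Tra_1^{d}\bigl(F(X)+F(X+Z)+AZ^T\bigr)}. \]
The identity $F(X+Z)=F(X)+F(Z)+Z(H+H^T)X^T$ collapses $F(X)+F(X+Z)$ to $F(Z)+Z(H+H^T)X^T$, and then the $X$-sum vanishes unless $Z(H+H^T)=0$, i.e.\ unless $Z\in\mathcal Z:=\{Z:Z(H+H^T)=0\}$. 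This reduces the expression to
\[ g(A)^2 = q_0^s\sum_{Z\in\mathcal Z}(-1)^{\Tra_1^{d}(F(Z)+AZ^T)}. \]
Exactly as in the proof of Lemma~\ref{qua}, the restriction $Z\mapsto F(Z)$ is additive on $\mathcal Z$ because the cross term $Z_2(H+H^T)Z_1^T$ vanishes when $Z_2\in\mathcal Z$. Hence $Z\mapsto \Tra_1^{d}(F(Z)+AZ^T)$ is an $\F_2$-linear map $\mathcal Z\to\F_2$, therefore either identically zero or surjective, which gives $g(A)^2\in\{0,\,q_0^s\cdot|\mathcal Z|\}=\{0,\,q_0^{2s-r_F}\}$ as required.

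Next I compute the two moments using the orthogonality relation $\sum_{A\in\bF_{q_0}^s}(-1)^{\Tra_1^{d}(AX^T)}=q_0^s\delta_{X,0}$. Swapping the order of summation,
\[ \sum_A g(A)=\sum_X(-1)^{\Tra_1^{d}(F(X))}\sum_A(-1)^{\Tra_1^{d}(AX^T)}=q_0^s, \]
and the same orthogonality applied to the displayed expression for $g(A)^2$ yields $\sum_A g(A)^2=q_0^{2s}$. Letting $N_0,N_+,N_-$ denote the number of $A\in\bF_{q_0}^s$ for which $g(A)$ equals $0$, $q_0^{s-r_F/2}$, $-q_0^{s-r_F/2}$ respectively, the three equations
\[ N_0+N_++N_-=q_0^s,\qquad q_0^{s-r_F/2}(N_+-N_-)=q_0^s,\qquad q_0^{2s-r_F}(N_++N_-)=q_0^{2s} \]
have the unique solution stated in the lemma. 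The delicate step is the reduction of $g(A)^2$ to a sum over the radical $\mathcal Z$ together with the additivity of $F$ on $\mathcal Z$; once these are in hand the rest is bookkeeping with additive characters.
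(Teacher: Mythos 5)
Your proof is correct. Note that the paper itself gives no proof of this lemma --- it simply cites \cite{Lid Nie}, Chap.~6, where the result is obtained by reducing the quadratic form to a canonical diagonal shape and completing the square. Your route is different and arguably better suited to this paper: you first extend the squaring argument from the paper's own proof of Lemma~\ref{qua} to handle the extra linear term $AX^T$ (the key points --- that $g(A)^2$ collapses to $q_0^s\sum_{Z\in\mathcal Z}(-1)^{\Tra_1^d(F(Z)+AZ^T)}$ and that $Z\mapsto \Tra_1^d(F(Z)+AZ^T)$ is additive on the radical $\mathcal Z$ because the cross term $Z_2(H+H^T)Z_1^T$ dies --- are both verified correctly), which pins the values to $\{0,\pm q_0^{s-r_F/2}\}$; you then get the multiplicities from the first and second moments over $A$ via character orthogonality, and the resulting $3\times 3$ system does have the stated unique solution ($N_+-N_-=q_0^{r_F/2}$, $N_++N_-=q_0^{r_F}$). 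What this buys is a self-contained derivation using only the machinery already present in Section~2, at the cost of not exhibiting which $A$ produce which value (the canonical-form proof in \cite{Lid Nie} does, via the explicit shift completing the square); for the counting purposes of this paper that extra information is not needed.
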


Since $s=n/d$, the field $\bF_q$ is a vector space over $\bF_{q_0}$
with dimension $s$. We fix a basis $v_1,\cdots,v_s$ of $\bF_q$ over
$\bF_{q_0}$. Then each $x\in \bF_q$ can be uniquely expressed as
\[x=x_1v_1+\cdots+x_sv_s\quad (x_i\in \bF_{q_0}).\]
Thus we have the following $\bF_{q_0}$-linear isomorphism:
\[\bF_q\xrightarrow{\sim}\bF_{q_0}^s,\quad x=x_1v_1+\cdots+x_sv_s\mapsto
X=(x_1,\cdots,x_s).\] With this isomorphism, a function $f:\bF_q\ra
\bF_{q_0}$ induces a function $F:\bF_{q_0}^s\ra \bF_{q_0}$ where for
$X=(x_1,\cdots,x_s)\in \bF_{q_0}^s, F(X)=f(x)$ with
$x=x_1v_1+\cdots+x_sv_s$. In this way, function
$f(x)=\Tra_{d}^n(\gamma x)$ for $\gamma\in \bF_q$ induces a linear
form \begin{equation} F(X)=\Tra_{d}^n(\gamma
x)=\sum\limits_{i=1}^{s}\Tra_{d}^n(\gamma v_i)x_i=A_{\ga}X^T
\end{equation}\label{def A_gamma}
 where $A_{\ga}=\left(\Tra_{d}^n(\gamma
v_1),\cdots,\Tra_{d}^n(\gamma v_s)\right),$
 and
$f_{\alpha,\beta}(x)=\Tra_{d}^m(\alpha x^{p^m+1})+\Tra_d^n(\beta
x^{p^k+1})$  for $\al\in \bF_{p^m}$, $\be\in \bF_q$ induces a
quadratic form

\begin{eqnarray}\label{def H_al be}
F_{\alpha,\beta}(X)=XH_{\alpha,\beta}X^T
\end{eqnarray}

From Lemma \ref{qua},  for $\alpha\in \bF_{2^m},(\beta,\ga)\in
\bF_q^2$, in order to determine the values of
\[T(\alpha,\beta)=\sum\limits_{x\in \bF_q}(-1)^{\Tra_1^m (\alpha
x^{2^{m}+1})+\Tra_1^n(\beta x^{2^k+1})}=\sum\limits_{X\in
\bF_{q_0}^s}(-1)^{\Tra_1^{d}\left(XH_{\alpha,\beta}X^T\right)}\]
and
\[S(\alpha,\beta,\ga)=\sum\limits_{x\in \bF_q}(-1)^{\Tra_1^m (\alpha
x^{2^{m}+1})+\Tra_1^n(\beta x^{2^k+1}+\ga x)}=\sum\limits_{X\in
\bF_{q_0}^s}(-1)^{\Tra_1^{d}\left(XH_{\alpha,\beta}X^T+A_{\ga}X^T\right)},\]
we need to determine the rank of $H_{\alpha,\beta}+H_{\al,\be}^T$
over $\bF_{q_0}$.

Define $d'=\gcd(m+k,2k)$. Then an easy observation shows
\begin{equation}\label{rel d d'}
d'=\left\{
\begin{array}{ll}
2d, & \text{if}\; m/d\;\text{and}\; k/d \;\text{are both odd};\\[1mm]
d, &\text{otherwise.}
\end{array}
\right.
\end{equation}

Special case of the subsequent result has been proven in \cite{Zen
Liu}.
\begin{lemma}\label{rank}
For $(\alpha,\beta)\in \bF_{p^m}\times \bF_q\backslash\{(0,0)\}$,
let $r_{\alpha,\beta}$ be the rank of
$H_{\alpha,\beta}+H_{\alpha,\beta}^T$.  Then we have
\begin{itemize}
  \item[(i).] if $d'=d$, then the possible values of $r_{\al,\be}$
  are $s$, $s-2$.
  \item[(ii).] if $d'=2d$, then the possible values of $r_{\al,\be}$
  are $s$, $s-2$, $s-4$.
\end{itemize}

Moreover, let $n_i$ be the number of $({\alpha,\beta})$ with
$r_{\alpha,\beta}=s-i$. In the case $d'=d$, we have
    \[
    \begin{array}{ll}
    &n_0=\left(2^{n+2d}-2^{n+d}-2^n+2^{m+2d}-2^{m+d}+2^{2d}\right)(2^m-1)\large{/}(2^{2d}-1)\\[2mm]
    &n_2=(2^{m+d}-1)(2^n-1)\large{/}(2^{2d}-1).
    \end{array}
    \]
\end{lemma}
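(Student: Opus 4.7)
I will proceed in three stages, the middle one being the main obstacle.

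\emph{Identifying the radical.} A direct polarization of $F_{\alpha,\beta}(X)=XH_{\alpha,\beta}X^T$, using the identity $\Tra_d^m(u)=\Tra_d^n(\lambda_0 u)$ for $u\in\bF_{2^m}$ with any $\lambda_0\in\bF_q$ satisfying $\Tra_m^n(\lambda_0)=1$, together with Frobenius invariance of the trace, rewrites the associated $\bF_{q_0}$-bilinear form as
\[B_{\alpha,\beta}(x,y)=\Tra_d^n\bigl(L_{\alpha,\beta}(x)\,y\bigr),\qquad L_{\alpha,\beta}(x)=\alpha x^{2^m}+\beta x^{2^k}+\beta^{2^{n-k}}x^{2^{n-k}}.\]
Nondegeneracy of the trace pairing on $\bF_q$ over $\bF_{q_0}$ identifies the radical of $B_{\alpha,\beta}$ with $\ker L_{\alpha,\beta}$, giving $r_{\alpha,\beta}=s-\dim_{\bF_{q_0}}\ker L_{\alpha,\beta}$.

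\emph{Bounding the kernel.} Raising $L_{\alpha,\beta}(x)=0$ to the $2^k$-th power yields the equivalent equation
\[P_{\alpha,\beta}(x):=\beta x+\beta^{2^k}x^{2^{2k}}+\alpha^{2^k}x^{2^{m+k}}=0.\]
Since $d'=\gcd(2k,m+k)$ divides the exponents $0,2k,m+k$ and also divides $n=2m$, $P_{\alpha,\beta}$ is $\bF_{2^{d'}}$-linearized, so $\ker P_{\alpha,\beta}=\ker L_{\alpha,\beta}$ is an $\bF_{2^{d'}}$-subspace of $\bF_q$. Since $B_{\alpha,\beta}$ is alternating in characteristic $2$, $\dim_{\bF_{q_0}}\ker L_{\alpha,\beta}$ has the parity of $s=2m/d$, which is even. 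Thus the lemma reduces to establishing
\[\dim_{\bF_{2^{d'}}}\ker P_{\alpha,\beta}\le 2.\]
I would prove this by assuming three $\bF_{2^{d'}}$-independent roots $x_1,x_2,x_3$ of $P_{\alpha,\beta}$ and eliminating $\alpha^{2^k}$ and then $\beta^{2^k}$ from the system $P_{\alpha,\beta}(x_i)=0$, deriving a polynomial identity in the $x_i$'s that forces an $\bF_{2^{d'}}$-linear dependence among them. Via (\ref{rel d d'}), this yields the ranks claimed in parts (i) and (ii). This elimination is the principal difficulty, as it must be set up with care and handled separately in the cases $d'=d$ and $d'=2d$.

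\emph{Counting when $d'=d$.} Here $\dim_{\bF_{q_0}}\ker L_{\alpha,\beta}\in\{0,2\}$, so every two-dimensional kernel contains $q_0^2-1$ nonzero vectors. I double-count
\[N=\#\{(x,\alpha,\beta)\in\bF_q^*\times\bF_{2^m}\times\bF_q:L_{\alpha,\beta}(x)=0\}.\]
Grouping by $(\alpha,\beta)$ gives $N=(q_0^2-1)\,n_2$. Grouping by $x$: for each fixed $x\in\bF_q^*$ the condition $L_{\alpha,\beta}(x)=0$ with $\alpha\in\bF_{2^m}$ is equivalent to $\beta x^{2^k}+\beta^{2^{n-k}}x^{2^{n-k}}\in x^{2^m}\bF_{2^m}$, which is an $\bF_{2^d}$-linear constraint on $\beta$ together with the unique determination of $\alpha$; evaluating the solution count in closed form by analyzing the $\bF_{2^d}$-linear map $\beta\mapsto\beta x^{2^k}+\beta^{2^{n-k}}x^{2^{n-k}}$ and summing over $x\in\bF_q^*$ recovers $N$ and hence $n_2$. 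The formula $n_0=2^{m+n}-1-n_2$ then completes the count.
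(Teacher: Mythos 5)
Your first stage (identifying the radical with $\ker\phi_{\al,\be}$ via the nondegenerate trace pairing, observing that the kernel is a subspace over $\bF_{2^{d'}}$ after raising to the $2^k$-th power, and using evenness of the rank of an alternating form) coincides with the paper's setup, and your reduction of both (i) and (ii) to the single inequality $\dim_{\bF_{2^{d'}}}\ker P_{\al,\be}\le 2$ is correct. The gap is that this inequality — which you yourself flag as the principal difficulty — is exactly the hard content of the lemma, and the elimination you sketch does not obviously deliver it. Eliminating $\al^{2^k}$ and $\be^{2^k}$ from the three equations $P_{\al,\be}(x_i)=0$ amounts to saying that the vector $(\be,\be^{2^k},\al^{2^k})\ne 0$ lies in the kernel of the matrix $\bigl(x_i^{2^{e_j}}\bigr)$ with exponents $e_1=0$, $e_2=2k$, $e_3=m+k$, hence that its determinant vanishes. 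Since these exponents are not in geometric progression (unless $m=3k$), this is not a Moore determinant, and its vanishing does not force an $\bF_{2^{d'}}$-linear dependence among $x_1,x_2,x_3$: as a linearized polynomial in $x_3$ it has a kernel of $\bF_2$-dimension up to $\max(2k,m+k)$, far exceeding the $2d'$-dimensional span of $x_1,x_2$. To close the argument one must exploit the special shape of the coefficient vector ($\be^{2^k}$ a Frobenius power of $\be$, and $\al\in\bF_{2^m}$), and doing so from scratch essentially reproves Bluher's theorem. The paper instead substitutes $z=x^{2^k(2^{m-k}-1)}$ to pass to the projective polynomial $\psi_{\al,\be}(z)=\be^{2^{n-k}}z^{2^{m-k}+1}+\al z+\be$, imports Bluher's result that such a polynomial has $0$, $1$, $2$ or $2^{d'}+1$ roots (Lemma \ref{num solution}), and then uses Lemma \ref{psi} to decide which roots $z$ lift back to roots $x$ (the $(2^d-1)$-th power conditions), ruling out the parity-violating cases. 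None of this is present in, or replaceable by, your sketch as written.

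Your third stage is a genuinely different and attractive route to $n_2$: the paper computes $n_0$ by applying Bluher's root-count distribution over $b\in\bF_{2^m}^*$ (with $l=m$, $h=m+k$) together with the $(2^m+1)$-to-one correspondence between $\be$ and $b$, whereas your double count of $\{(x,\al,\be): \phi_{\al,\be}(x)=0,\ x\ne 0\}$ avoids that machinery entirely. It can in fact be completed cleanly: the substitution $x\mapsto tx$, $(\al,\be)\mapsto(\al t^{2^m+1},\be t^{2^k+1})$ shows the inner count is independent of $x\in\bF_q^*$, so it suffices to take $x=1$, where the condition becomes $\Tra_m^n(\be)\in\bF_{2^d}$, giving $2^{m+d}$ pairs per $x$ and hence $n_2=(2^{m+d}-1)(2^n-1)/(2^{2d}-1)$. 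But note this count is only valid once you know the kernel dimension is $0$ or $2$ when $d'=d$, so it still rests on the unproved bound above.
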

We need to introduce some results to prove Lemma \ref{rank}.

\begin{lemma}\label{num solution}(see Bluher \cite{Bluh}, Theorem 5.4 and 5.6)
Let $g(z)=z^{p^h+1}-bz+b$ with $b\in \bF_{p^l}^*$ and $e=\gcd(h,l)$.
Then we have
\begin{itemize}
    \item[(i).]the number of the solutions to $g(z)=0$ in
$\bF_{p^l}$ is $0$, $1$, $2$ or $p^{e}+1$.
    \item[(ii).]if $z_0$ is the
unique solution in $\bF_{p^l}^*$, then
$(z_0-1)^{\frac{p^l-1}{p^{e}-1}}=1.$
    \item[(iii).]denote by $N_i$ the number of $b\in \bF_{p^l}^*$
    such that $g(z)=0$ has exactly $i$ roots in $\bF_{2^l}$. Then
    we get
       \begin{itemize}
        \item if $l/e$ is even, then
        \[N_0=\frac{2^{l+e}-2^e}{2(2^e+1)},\quad N_1=2^{l-e},\quad N_2=\frac{(2^e-2)(2^l-1)}{2(2^e-1)},\quad N_{2^e+1}=\frac{2^{l-e}-2^e}{2^{2e}-1}.\]
        \item if $l/e$ is odd, then
        \[N_0=\frac{2^{l+e}+2^e}{2(2^e+1)},\quad N_1=2^{l-e}-1,\quad N_2=\frac{(2^e-2)(2^l-1)}{2(2^e-1)},\quad N_{2^e+1}=\frac{2^{l-e}-1}{2^{2e}-1}.\]
       \end{itemize}
\end{itemize}

\end{lemma}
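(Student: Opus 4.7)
The plan is to recover Bluher's theorem for this particular polynomial by first reducing it to a cleaner canonical form via a Möbius substitution, then classifying the fiber sizes of the resulting map, and finally using a moment-count system for the distribution.

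Reduction. The map $z = 1/(1-u)$ gives a bijection $\bF_{p^l} \setminus \{0,1\} \leftrightarrow \bF_{p^l} \setminus \{0,1\}$. Since $(1-u)^{p^h} = 1 - u^{p^h}$ in characteristic $p$, the equation $g(z) = 0$ is transformed into
\[
u^{p^h+1} - u + c = 0, \qquad c := 1/b \in \bF_{p^l}^*.
\]
Neither $u = 0$ nor $u = 1$ satisfies this for $c \neq 0$, so counting roots of $g$ in $\bF_{p^l}^*$ is equivalent to counting roots of the reduced polynomial in all of $\bF_{p^l}$. I first prove (i) and (ii) in these $u$-coordinates.

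Part (i). Fix a root $u_1$ and substitute $u = u_1 + v$. Using $u_1^{p^h+1} - u_1 + c = 0$ and expanding $(u_1+v)^{p^h+1}$ via $(u_1+v)^{p^h} = u_1^{p^h} + v^{p^h}$, the equation $g(u_1+v) = 0$ reduces (in characteristic $2$) to
\[
v^{p^h} + u_1\, v^{p^h-1} + (1 + u_1)^{p^h} = 0.
\]
Rescaling $v = u_1 w$ converts this into $w^{p^h-1}(w+1) = \lambda^{p^h}$ with $\lambda = (1+u_1)/u_1$, a Kummer-type equation whose solution count in $\bF_{p^l}^*$ is controlled by the kernel of $w \mapsto w^{p^h-1}$ restricted to $\bF_{p^l}^*$, a subgroup of order $\gcd(p^h-1,p^l-1) = p^e - 1$. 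A straightforward case analysis shows the number of additional roots $u \neq u_1$ is $0$, $1$, or $p^e$; combined with $u_1$, the total is in $\{1,2,p^e+1\}$, and we add the empty case to get the four possibilities.

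Part (ii). If $u_0$ is the unique root of the reduced polynomial in $\bF_{p^l}$, then in the derived equation $v^{p^h} + u_0 v^{p^h-1} + (1+u_0)^{p^h}=0$ there is no nonzero solution $v \in \bF_{p^l}$, equivalently the element $(1+u_0)^{p^h}/u_0^{p^h}$ is not in the image of $w \mapsto w^{p^h-1}(w+1)$. Reversing the Möbius change $z_0 = 1/(1-u_0)$ gives $u_0 = (z_0-1)/z_0$ and $1+u_0 = (2z_0-1)/z_0$, and pushing the non-solvability condition through yields exactly $(z_0 - 1)^{(p^l-1)/(p^e-1)} = 1$, i.e., $z_0 - 1$ is a norm from $\bF_{p^l}$ to $\bF_{p^e}$.

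Part (iii). With the support of root-counts in $\{0,1,2,p^e+1\}$ from (i), the unknowns $N_0,N_1,N_2,N_{p^e+1}$ satisfy three elementary relations: total count $N_0+N_1+N_2+N_{p^e+1}=2^l-1$, first moment $N_1+2N_2+(p^e+1)N_{p^e+1}=2^l-2$ (since each $u \in \bF_{p^l}$ contributes exactly one $c$), and a direct determination of $N_1$ from (ii) as the number of $z_0 \in \bF_{p^l}^*$ satisfying the norm condition, which equals $(p^l-1)/[(p^l-1)/(p^e-1)] = p^e - 1$ up to overcounting corrections controlled by $\gcd(p^h+1,p^l-1)$; this gcd equals $p^e+1$ when $l/e$ is even and $1$ when $l/e$ is odd, which is precisely the source of the split in the stated formulas. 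A fourth constraint, obtained from a second-moment identity counting ordered pairs of distinct roots of the reduced polynomial, closes the $4\times 4$ linear system and yields the explicit $N_i$.

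The main obstacle is the structural step in part (i): establishing that the auxiliary polynomial $v^{p^h} + u_1 v^{p^h-1} + (1+u_1)^{p^h}$ has exactly $0$, $1$, or $p^e$ roots in $\bF_{p^l}^*$ and no intermediate value. Once this Kummer-type classification is in hand, parts (ii) and (iii) follow by systematic bookkeeping and solution of a small linear system whose coefficients depend on the parity of $l/e$ via $\gcd(p^h+1, p^l-1)$.
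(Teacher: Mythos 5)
The paper gives no proof of this lemma at all: it is imported verbatim from Bluher's paper (Theorems 5.4 and 5.6), so there is no in-paper argument to compare against and your attempt must stand on its own. Your reduction $z=1/(1-u)$ to $u^{p^h+1}-u+c=0$ with $c=1/b$ is correct (and since $g(0)=b$, $g(1)=1$, the root counts do match), and the substitution $u=u_1+v$ is the right opening move. But the mechanism you give for part (i) does not deliver the claimed conclusion. The equation $w^{p^h-1}(w+1)=\lambda^{p^h}$ is not of Kummer type: $w\mapsto w^{p^h-1}(w+1)$ is not a homomorphism, so its fibers are not cosets of $\ker\bigl(w\mapsto w^{p^h-1}\bigr)$; and even if they were, that kernel has order $p^e-1$, which would produce fiber sizes $0$ or $p^e-1$, not the $\{0,1,p^e\}$ you assert. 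The missing idea is one further inversion: setting $t=1/v$ turns $v^{p^h}+u_1v^{p^h-1}+(u_1^{p^h}-1)=0$ into the \emph{affine $p$-linearized} equation $(u_1^{p^h}-1)t^{p^h}+u_1t=-1$, whose solution set is empty or a coset of the kernel of the additive map $t\mapsto(u_1^{p^h}-1)t^{p^h}+u_1t$; that kernel has size $1$ or $\gcd(p^h-1,p^l-1)+1=p^e$, which is exactly what gives $0$, $1$ or $p^e$ extra roots. Part (ii) is repaired the same way: if the affine equation has no solution the additive map cannot be bijective, so the kernel is nontrivial, i.e.\ $u_1/(u_1^{p^h}-1)$ is a $(p^h-1)$-st power; in characteristic $2$ this element equals $z_0-1$ up to a $(p^h-1)$-st power, giving $(z_0-1)^{(p^l-1)/(p^e-1)}=1$.

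The genuine gap is in part (iii): your $4\times 4$ system does not close. The total count and first moment are fine (each $u\notin\{0,1\}$ contributes one $c$, giving $\sum iN_i=p^l-2$), but your proposed third input --- ``a direct determination of $N_1$ from (ii)'' --- fails on two counts. First, the number of $x\in\bF_{p^l}^*$ with $x^{(p^l-1)/(p^e-1)}=1$ is $(p^l-1)/(p^e-1)$, the \emph{order} of that subgroup; you computed its index $p^e-1$, which is off by an enormous factor from the true $N_1\in\{2^{l-e},2^{l-e}-1\}$. Second, part (ii) is only a necessary condition on a unique root, so even the corrected count is merely an upper bound; pinning down $N_1$ requires deciding, for each $u_1$ whose additive map has nontrivial kernel $\theta\bF_{2^e}$, whether $-1$ lies in its image --- a condition of the form $\Tr_{2^l/2^e}(1/\theta)\ne 0$ --- and this is not an ``overcounting correction controlled by $\gcd(p^h+1,p^l-1)$.'' The second-moment identity you invoke as a fourth equation is not free either: evaluating $\#\{(u_1,u_2):u_1\ne u_2 \text{ sharing a } c\}$ reduces to exactly the same image/kernel analysis, so it cannot be treated as elementary bookkeeping. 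What actually works, once the linearized viewpoint is in place, needs no moments at all: $u_1\mapsto u_1/(1+u_1)$ is a bijection of $\bF_{2^l}\setminus\{0,1\}$ onto itself, the kernel is nontrivial for exactly $(2^l-2^e)/(2^e-1)$ values (whence $N_2$), and the solvable/unsolvable split among these is governed by $\Tr_{2^l/2^e}(1/\theta)$, with the class $\theta\in\bF_{2^e}^*$ behaving differently according to the parity of $l/e$ --- which is the true source of the case split, and yields the stated $N_1$, $N_2$, $N_{2^e+1}$ and hence $N_0$. As written, your part (iii) does not reach these values.
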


The following lemma has been proven in \cite{Bluh},\cite{Zen Li} and
\cite{Zen Liu}. We will repeat part of the proof for
self-containing.
\begin{lemma}\label{psi}
Let $\psi_{\al,\be}(z)=\be^{2^{n-k}} z^{2^{m-k}+1}+\al z+\be$ with
$\al\in \bF_{2^m}^*, \be\in \bF_q^*$. Then
\begin{itemize}
  \item[(i).] $\psi_{\al,\be}(z)=0$ has either $0,1,2$ or $2^{d'}+1$
  solutions in $\bF_q$.
  \item[(ii).]  If $z_1, z_2$ are two solutions of
  $\psi_{\al,\be}(z)=0$ in $\bF_q$, then $(z_1z_2)^{\frac{q-1}{2^d-1}}=1$.
  \item[(iii).] If $\psi_{\al,\be}(z)=0$ has $2^{d'}+1$
  solutions in $\bF_q$, then for any two solutions $z_1$ and $z_2$, we
  have $(z_1/z_2)^{\frac{q-1}{2^{d'}-1}}=1$.
  \item[(iv).] If $\psi_{\al,\be}(z)=0$ has exactly one
  solution in $\bF_q$, say $z_0$, then $z_0^{\frac{q-1}{2^d-1}}=1$.
\end{itemize}
\end{lemma}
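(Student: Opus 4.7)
The plan is to reduce $\psi_{\al,\be}(z)=0$ to the Bluher trinomial of Lemma~\ref{num solution} and then translate its conclusions. Substituting $z=(\be/\al)\,w$ in $\psi_{\al,\be}$ and dividing through by $\be$ yields the equivalent equation
$$w^{2^{m-k}+1}+Bw+B=0,\qquad B=\frac{\al^{2^{m-k}+1}}{\be^{2^{n-k}+2^{m-k}}}\in\bF_q^{*}$$
(when $k>m$ one obtains an analogous Bluher equation of degree $2^{k-m}+1$ after raising the original equation to the $2^{k-m}$-th power and inverting; the remaining analysis is symmetric). In Lemma~\ref{num solution} I take $p=2$, $h=m-k$, $\ell=n=2m$; the governing gcd is
$$e=\gcd(m-k,\,2m)=\gcd(m-k,\,m+k)=\gcd(m+k,\,2k)=d',$$
so part (i) is immediate from Lemma~\ref{num solution}(i).

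For (ii), I would multiply $\psi_{\al,\be}(z_1)=0$ by $z_2$, multiply $\psi_{\al,\be}(z_2)=0$ by $z_1$, add, and apply the Frobenius identity $(x+y)^{2^{m-k}}=x^{2^{m-k}}+y^{2^{m-k}}$; dividing by $z_1+z_2\neq 0$ yields the key identity
$$\be^{2^{n-k}-1}\,(z_1z_2)\,(z_1+z_2)^{2^{m-k}-1}=1.$$
Set $M:=(q-1)/(2^d-1)$ and raise the identity to the $M$-th power. Since $d=\gcd(m,k)$ divides both $n-k$ and $m-k$, the quotients $(2^{n-k}-1)/(2^d-1)$ and $(2^{m-k}-1)/(2^d-1)$ are integers, so the $\be$- and $(z_1+z_2)$-factors each become trivial powers of $\be^{q-1}=(z_1+z_2)^{q-1}=1$. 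What remains is $(z_1z_2)^M=1$, which is (ii).

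For (iv), the unique solution $z_0$ produces the unique Bluher root $w_0=\al z_0/\be$, and Lemma~\ref{num solution}(ii) furnishes $(w_0-1)^{(q-1)/(2^{d'}-1)}=1$. Since $d\mid d'$ (as $d'\in\{d,2d\}$), the exponent $(q-1)/(2^{d'}-1)$ divides $M$, hence already $(w_0-1)^M=1$. The defining equation rewrites $w_0-1=\be^{2^{n-k}-1}z_0^{2^{m-k}+1}$, so
$$\be^{(2^{n-k}-1)M}\,z_0^{(2^{m-k}+1)M}=1.$$
The $\be$-factor equals $1$ by the same integrality argument as in (ii), and writing $(2^{m-k}+1)M=(2^{m-k}-1)M+2M$ collapses the $z_0$-exponent modulo $q-1$ to $2M$. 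Thus $z_0^{2M}=1$; since $x^2=1$ forces $x=1$ in characteristic~$2$, this gives $z_0^M=1$, proving (iv).

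For (iii), in the $(2^{d'}+1)$-solution case the structural content of Bluher's proof of Lemma~\ref{num solution} is that the solution set of $w^{2^{m-k}+1}+Bw+B=0$ is an $\bF_{2^{d'}}$-parameterized family on which the ratio of any two elements lies in the $(2^{d'}-1)$-th power subgroup of $\bF_q^{*}$; back-substituting $z=(\be/\al)\,w$ preserves ratios and yields $(z_1/z_2)^{(q-1)/(2^{d'}-1)}=1$. The most likely technical obstacle is part (iii): while parts (i), (ii), and (iv) are essentially self-contained arithmetic on top of Lemma~\ref{num solution}, establishing (iii) requires either extracting the precise coset description from within Bluher's proof or producing an independent argument by exhibiting an explicit $\bF_{2^{d'}}$-action on the solution set directly from the equation.
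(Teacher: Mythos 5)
Your proposal is correct, and it is worth recording how it relates to the paper's own (largely citational) proof. For part (i) you perform exactly the paper's reduction: the substitution $z=(\be/\al)w$ turns $\psi_{\al,\be}(z)=0$ into Bluher's trinomial $w^{2^{m-k}+1}+Bw+B=0$ with the same $B$ (the paper writes the exponent of $\be$ as $2^{m-k}(2^m+1)=2^{n-k}+2^{m-k}$), and $\gcd(m-k,n)=d'$; the only cosmetic slip is that one normalizes by the leading coefficient rather than by $\be$, but your $B$ is the correct one. For part (iii) you and the paper are in the same position: the paper simply invokes Bluher, Theorem 4.6(iv), for the statement that the ratio of two roots in the $(2^{d'}+1)$-root case is a $(2^{d'}-1)$-th power, and your honest assessment that this is the one part requiring either that citation or an explicit $\bF_{2^{d'}}$-structure on the root set matches what the paper actually does. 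The genuine difference is in parts (ii) and (iv), which the paper disposes of by citing Zeng--Li, Prop.~1(2),(3), whereas you supply self-contained arguments, and both check out: the symmetrization $\be^{2^{n-k}-1}(z_1z_2)(z_1+z_2)^{2^{m-k}-1}=1$ follows correctly from adding $z_2\psi_{\al,\be}(z_1)+z_1\psi_{\al,\be}(z_2)$ and using $z_1^{2^{m-k}}+z_2^{2^{m-k}}=(z_1+z_2)^{2^{m-k}}$, and raising to the power $(q-1)/(2^d-1)$ kills the $\be$ and $(z_1+z_2)$ factors because $d$ divides $n-k$ and $m-k$; likewise your derivation of (iv) from Bluher's $(w_0-1)^{(q-1)/(2^{d'}-1)}=1$ via $w_0-1=\be^{2^{n-k}-1}z_0^{2^{m-k}+1}$ and the observation that $x^2=1$ forces $x=1$ in characteristic $2$ is sound. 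The net effect is that your write-up proves more of the lemma from first principles than the paper does, at the cost of no additional machinery; the one caveat, which applies equally to the paper, is that for $k>m$ the exponent $m-k$ must be read modulo $n$ (replacing $2^{m-k}$ by the Frobenius power $2^{n+m-k}$), under which all the divisibility claims ($d\mid n+m-k$, etc.) persist.
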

\begin{proof}
\begin{itemize}
  \item[(i).] By scaling $y=\frac{\al}{\be} z$ and
$b=\frac{\al^{2^{m-k}+1}}{\beta^{2^{m-k}(2^m+1)}}$, we can rewrite
the equation $\psi_{\al,\be}(z)=0$ as
\begin{equation}\label{standard equ}
g(y)=y^{2^{m-k}+1}+by+b=0.
\end{equation}
Since $b\in \bF_{2^m}^*\subseteq\bF_q^*$ and $\gcd(m-k,
n)=\gcd(m-k,2k)=d'$, then the result follows from Lemma \ref{num
solution} (i).
  \item[(ii).] See \cite{Zen Li}, Prop.1 (2).
  \item[(iii).] Denote by $y_i=\frac{\al}{\be} z_i$ for $i=1,2$. Since $\gcd(n,m-k)=d'$, from \cite{Bluh}, Theorem 4.6 (iv) we get $(y_1/y_2)^{\frac{q-1}{2^{d'}-1}}=1$
  which is equivalent to
  $(z_1/z_2)^{\frac{q-1}{2^{d'}-1}}=1$.
  \item[(iv).] See \cite{Zen Li}, Prop.1 (3).
\end{itemize}
\end{proof}
\begin{remark}
\begin{itemize}
    \item[(i).] If $\psi_{\al,\be}(z)=0$ has exactly one or $2^{d'}+1$
  solutions in $\bF_q$, then each solution is a $(2^d-1)$-th power
  in $\bF_q$.
    \item[(ii).] If $\psi_{\al,\be}(z)=0$ has exactly two solutions in
    $\bF_q$, then none or both of them are $(2^d-1)$-th powers
  in $\bF_q$.
\end{itemize}
\end{remark}

{\it \textbf{Proof of Lemma \ref{rank}}}: (i). For
$Y=(y_1,\cdots,y_s)\in \bF_{q_0}^s$, $y=y_1v_1+\cdots+y_sv_s\in
\bF_q$, we know that
\begin{equation}\label{bil form1}
F_{\alpha,\beta}(X+Y)-F_{\alpha,\beta}(X)-F_{\alpha,\beta}(Y)=X\left(H_{\alpha,\beta}+H_{\alpha,\beta}^T\right)Y^T
\end{equation}
is equal to
\begin{equation}\label{bil form2}
f_{\alpha,\beta}(x+y)-f_{\alpha,\beta}(x)-f_{\alpha,\beta}(y)=\Tra_{d}^n\left(y(\alpha
x^{2^{m}}+\beta x^{2^k}+\be^{2^{n-k}} x^{2^{n-k}})\right).
\end{equation}
 Let
\begin{equation}\label{def phi}
\phi_{\al,\be}(x)=\alpha x^{2^{m}}+\beta x^{2^k}+\be^{2^{n-k}}
x^{2^{n-k}}.
\end{equation}
 Therefore,
\[{\setlength\arraycolsep{2pt}
\begin{array}{lcl}
r_{\al,\be}=r& \Leftrightarrow&\text{the number of common solutions of}\;X\left(H_{\alpha,\beta}+H_{\alpha,\beta}^T\right)Y^T=0\;\text{for all}\;Y\in \bF_{q_0}^s\;\text{is}\; q_0^{s-r}, \\[2mm]
& \Leftrightarrow&\text{the number of common solutions of}\;\Tra_{d}^n\left(y\cdot\phi_{\al,\be}(x)\right)=0\;\text{for all}\;y\in \bF_q\;\text{is}\; q_0^{s-r}, \\[2mm]
&\Leftrightarrow&\phi_{\al,\be}(x)=0\;\text{has}\; q_0^{s-r}\;
\text{solutions in}\; \bF_q.
\end{array}
}
\]

Since $\phi_{\al,\be}(x)$ is a $2^d$-linearized polynomial,  then
the set of the zeroes to $\phi_{\al,\be}(x)=0$ in $\bF_{q}$, say
$V$, forms an $\bF_{2^d}$-vector space.

If $\al=0$ and $\be\neq 0$, $\phi_{\al,\be}(x)=0$ becomes $\be
x^{2^k}+\be^{2^{n-k}}x^{2^{n-k}}=0$ and then $\be
^{2^k}x^{2^{2k}}+\be x=0$. In this case (\ref{def phi}) has $1$ or
$2^{d'}$ solutions according to $\be^{1-2^k}$ is $(2^{2d}-1)$-th
power in $\bF_q$ or not. Hence $r_{0,\be}=s$ or $s-2$. More
precisely, in the case $m/d$ is even, $\be^{1-2^k}$ is
$(2^{2d}-1)$-th power in $\bF_q^*$ if and only if $\be$ is a
$(p^d+1)$-th power in $\bF_q^*$. Hence the numbers of $\be\in
\bF_q^*$ such that $r_{0,\be}=s-2$ and $s$ are exactly
$\frac{2^n-1}{2^d+1}$ and $\frac{2^d(2^n-1)}{2^d+1}$ respectively.
In the case $k/d$ is even, $\be^{1-2^k}$ is always $(2^{2d}-1)$-th
power in $\bF_q^*$ which follows that $r_{0,\be}=s-2$ for $\be\in
\bF_q^*$. If $\al\neq 0$ and $\be=0$, then $\phi_{\al,0}(x)=0$ has
unique solution $x=0$ and as a consequence $r_{\al,0}=s$.

In the following we assume $\al\be\neq 0$, we need to consider the
nonzero solutions of $\phi_{\al,\be}(x)=0$. By substituting
$z=x^{2^k(p^{m-k}-1)}$ we get
\begin{equation}\label{def psi}
\psi_{\al,\be}(z)=\be^{2^{n-k}}z^{2^{m-k}+1}+\al z+\be=0.
\end{equation}

From Lemma \ref{num solution}, $\psi_{\al,\be}(z)=0$ has either
$0,1,2$ or $2^{d'}+1$ solutions in $\bF_q$.  In the case $d'=d$, by
Lemma \ref{psi} and its Remark,
\begin{itemize}
    \item if $\psi_{\al,\be}(z)=0$ has no solution in $\bF_q$,
    then $\phi_{\al,\be}(x)=0$ has unique solution $x=0$ in
    $\bF_q$ and $r_{\al,\be}=s$.
    \item if $\psi_{\al,\be}(z)=0$ has exactly one in $\bF_q$
    which is also a $(2^d-1)$-th power, then $r_{\al,\be}=s-1$.
    But $s-1$ is not even that leads to a contradiction.
    \item if $\psi_{\al,\be}(z)=0$ has two solutions in $\bF_q$,
    then $\phi_{\al,\be}(x)=0$ has one or $2(2^d-1)+1$ solutions in
    $\bF_q$. Note that $2^{d+1}-1$ is not a $2^d$-th power
which is impossible. Hence $r_{\al,\be}=s$.
    \item if $\psi_{\al,\be}(z)=0$ has $2^d+1$ solutions in
    $\bF_q$, then they are all $(2^d-1)$-th power. Therefore $\phi_{\al,\be}(x)=0$ has
    $(2^d+1)(2^d-1)+1=2^{2d}$ solutions in $\bF_q$ and
    $r_{\al,\be}=s-2$.
\end{itemize}

Again by Lemma \ref{num solution}, $\psi_{\al,\be}(z)=0$ has either
$0,1,2$ or $2^{d}+1$ solutions in $\bF_{2^m}$.  Note that the
solutions of $\psi_{\al,\be}(z)=0$ have one-to-one correspondence
with those of $g(y)=0$ by substituting $y=\frac{\al}{\be}z$.  If
$y_0\in \bF_q\big{\backslash}\bF_{2^m}$ satisfying $g(y_0)=0$, then
$g(y_0^{2^m})=0$. Hence
\begin{itemize}
    \item if $g(y)=0$ has no or two solutions in
    $\bF_{2^m}$, then $g(y)=0$ and $\psi_{\al,\be}(z)=0$ has no or two solutions in
    $\bF_{q}$ and $r_{\al,\be}=s$.
    \item $\psi_{\al,\be}(z)=0$ has one or $2^d+1$ solutions in
    $\bF_{2^m}$, then $g(y)=0$ and $\psi_{\al,\be}(z)=0$ has $2^{d}+1$ solutions in
    $\bF_{q}$ and $r_{\al,\be}=s-2$.
\end{itemize}
For fixed $b$ and $\al\in \bF_{2^m}^*$, the number of $\be\in
\bF_{q}^*$ satisfying
$b=\frac{\al^{2^{m-k}+1}}{\beta^{2^{m-k}(p^m+1)}}$ is $2^m+1$.
Applying Lemma \ref{num solution} with $l=m$ and $h=m+k$ we get that
\begin{itemize}
    \item if $m/d$ is even, then
    \[
    \begin{array}{ll}
    &n_0=\left(\frac{2^{m+d}-2^d}{2(2^d+1)}+\frac{(2^d-2)(2^m-1)}{2(2^d-1)}\right)\cdot (2^m-1)\cdot
    (2^m+1)+\frac{2^d(2^n-1)}{2^d+1}+(2^m-1)\\[2mm]
    &\quad
    =\left(2^{n+2d}-2^{n+d}-2^n+2^{m+2d}-2^{m+d}+2^{2d}\right)(2^m-1)\large{/}(2^{2d}-1).
    \end{array}
    \]
    \item if $k/d$ is even, then
    \[
    \begin{array}{ll}
    &n_0=\left(\frac{2^{m+d}+2^d}{2(2^d+1)}+\frac{(2^d-2)(2^m-1)}{2(2^d-1)}\right)\cdot (2^m-1)\cdot
    (2^m+1)+(2^m-1)\\[2mm]
    &\quad
    =\left(2^{n+2d}-2^{n+d}-2^n+2^{m+2d}-2^{m+d}+2^{2d}\right)(2^m-1)\large{/}(2^{2d}-1).
    \end{array}
    \]
\end{itemize}
By $n_0+n_2=2^{3m}-1$ we get the result.

In the case $d'=2d$,  a similar argument gives the result. $\square$

 In order to determine the multiplicity of each value of
$T(\al,\be)$
 for $\al\in \bF_{p^m},\be\in \bF_q$, we need the
following result on moments of $T(\al,\be)$.
\begin{lemma}\label{moment}
For the exponential sum $T(\al,\be)$,
\[\begin{array}{ll}&(i). \;\;\sum\limits_{\al\in \bF_{2^m},\be\in
\bF_q}T(\al,\be)=2^{3m};\\[4mm]
                   &(ii).
                   \sum\limits_{\al\in \bF_{2^m},\be\in
\bF_q}T(\al,\be)^2=\left\{\begin{array}{ll} 2^{5m} &\text{if}\;
d'=d\\[2mm]
2^{3m}(2^{n+d}+2^{n}-2^d)&\text{if}\; d'=2d;
\end{array}\right.\\[4mm]
                         &(iii).
                   \sum\limits_{\al\in \bF_{2^m},\be\in
\bF_q}T(\al,\be)^3=\left\{\begin{array}{ll}
2^{3m}(2^{n+d}+2^{n}-2^d) &\text{if}\;
d'=d\\[2mm]
2^{3m}(2^{n+3d}+2^n-2^{3d})&\text{if}\; d'=2d;
\end{array}\right.
\end{array}\]
\end{lemma}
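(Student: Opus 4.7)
\medskip
\noindent\textbf{Proof proposal.}
The plan is to expand each moment via the standard orthogonality of additive characters and reduce to a counting problem in $\bF_q$. Writing $\psi$ and $\chi$ for the canonical additive characters of $\bF_{2^m}$ and $\bF_q$, expanding $T(\alpha,\beta)^j$ as a $j$-fold sum over $\bF_q^j$ and swapping the order of summation gives
\[\sum_{\alpha\in\bF_{2^m}}\sum_{\beta\in\bF_q}T(\alpha,\beta)^j \;=\; 2^{m+n}\cdot N_j,\]
where $N_j$ is the number of tuples $(x_1,\ldots,x_j)\in\bF_q^j$ satisfying both $\sum_{i=1}^{j}x_i^{2^m+1}=0$ in $\bF_{2^m}$ and $\sum_{i=1}^{j}x_i^{2^k+1}=0$ in $\bF_q$.

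For (i), the first equation alone forces $x=0$ (since $x^{2^m+1}$ is the norm from $\bF_q$ to $\bF_{2^m}$), so $N_1=1$ and the first moment equals $2^{m+n}=2^{3m}$.

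For (ii), separating $x=0$ (which forces $y=0$ and contributes $1$), the remaining count becomes $(q-1)\,|H_1\cap H_2|$, where $H_1,H_2$ are the cyclic subgroups of $\bF_q^*$ defined by $u^{2^m+1}=1$ and $u^{2^k+1}=1$. Applying the standard identities $\gcd(2^a+1,2^b-1)=(2^{\gcd(2a,b)}-1)/(2^{\gcd(a,b)}-1)$ and $\gcd(2^a+1,2^b+1)=2^{\gcd(a,b)}+1$ (the latter valid precisely when $a/\gcd(a,b)$ and $b/\gcd(a,b)$ are both odd, and equal to $1$ otherwise), a case analysis on the parities of $m/d$ and $k/d$ yields $|H_1\cap H_2|=1$ whenever $d'=d$ and $|H_1\cap H_2|=2^d+1$ whenever $d'=2d$. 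Substituting $N_2=1+(q-1)|H_1\cap H_2|$ and multiplying by $2^{m+n}$ produces the two stated values.

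For (iii), I split the triple on $z$: the contribution from $z=0$ is $N_2$, while for $z\in\bF_q^*$ the substitution $(x,y)\mapsto(zx,zy)$ scales both defining equations homogeneously and contributes $(q-1)\,M$ triples, where
\[M \;=\; \#\{(x,y)\in\bF_q^2 :\; x^{2^m+1}+y^{2^m+1}=1 \text{ in } \bF_{2^m}, \; x^{2^k+1}+y^{2^k+1}=1 \text{ in } \bF_q\}.\]
Thus $N_3=N_2+(q-1)M$, and the target identities reduce to showing $M=2^d$ when $d'=d$ and $M=2^{3d}-2^d$ when $d'=2d$. Computing $M$ is the main obstacle; my plan is to solve the first equation as $y^{2^m+1}=1+x^{2^m+1}$ (whose $y$-fibre in $\bF_q$ has size $1$ or $2^m+1$ according as the right-hand side vanishes or not), substitute into the second equation to obtain a Bluher-type univariate relation on $y$, and classify its roots via Lemma \ref{num solution} and Lemma \ref{psi}, whose solution counts in $\bF_q$ are controlled precisely by $d$ and $d'$. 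Aggregating the fibre counts through a case analysis keyed to the parities of $m/d$ and $k/d$ should then produce the stated values of $M$ and finish (iii).
\medskip
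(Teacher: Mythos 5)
Parts (i) and (ii) of your proposal are correct and in substance identical to the paper's proof: the paper also reduces the $j$-th moment to $2^{3m}N_j$ and, for $j=2$, sets $t=x/y$ and observes $t^{2^{m-k}-1}=1$ forces $t\in\bF_{2^{d'}}^*$, then counts the $t$ with $t^{2^m+1}=t^{2^k+1}=1$; your reformulation via $|H_1\cap H_2|=\gcd(2^m+1,2^k+1)$ and the standard gcd identities reaches the same counts $N_2=q$ (if $d'=d$) and $N_2=2^{n+d}+2^n-2^d$ (if $d'=2d$). Your decomposition $N_3=N_2+(q-1)M$ for the third moment is also valid (the paper splits on $xyz=0$ instead, but the two bookkeepings agree), and your target values $M=2^d$ and $M=2^{3d}-2^d$ are exactly what is needed.

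The genuine gap is in the evaluation of $M$, which you correctly identify as the main obstacle but then only sketch, and the sketched route is unlikely to work as stated. Solving $y^{2^m+1}=1+x^{2^m+1}$ gives a norm fibre of size $2^m+1$, but $y$ is not a univariate function of $x$, so substituting into $x^{2^k+1}+y^{2^k+1}=1$ does not produce an equation of the Bluher form $z^{2^h+1}+bz+b=0$ to which Lemma \ref{num solution} or Lemma \ref{psi} applies; you would instead face a character-sum or coset-counting problem over the group of $(2^m+1)$-st roots of unity with no clear handle. The paper's actual key step is different and essential: combining the two equations yields $\left(x^{2^{m+k}}+x\right)\left(x^{2^m}+x^{2^k}\right)=0$, hence $x$ (and symmetrically $y$) lies in $\bF_{2^{d'}}$ since $\gcd(m+k,n)=\gcd(m-k,n)=d'$. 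The system then collapses to $x+y+1=0$ over $\bF_{2^d}$ when $d'=d$ (giving $M=2^d$ once the $xy=0$ points are handled), and to the Hermitian curve $x^{2^d+1}+y^{2^d+1}+1=0$ over $\bF_{2^{2d}}$ when $d'=2d$, whose known count of $2^{3d}+1$ projective points yields $M=2^{3d}-2^d$ after removing the points at infinity and those with $xy=0$. Without this reduction to $\bF_{2^{d'}}$ and the Hermitian point count (or an equivalent device), part (iii) is not proved.
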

\begin{proof}
(i). We observe that
\[ {\setlength\arraycolsep{2pt}
\begin{array}{ll}
&\sum\limits_{\al\in \bF_{p^m},\be\in
\bF_q}T(\al,\be)=\sum\limits_{\al\in
\bF_{2^m},\be\in\bF_q}\sum\limits_{x\in
\bF_q}(-1)^{\Tra_1^m(\al x^{2^m+1})+\Tra_1^n(\be x^{2^k+1})}\\[3mm]
&\quad\quad=\sum\limits_{x\in\bF_q}\sum\limits_{\al\in
\bF_{2^m}}(-1)^{\Tra_1^m(\al x^{2^m+1})}\sum\limits_{\be\in
\bF_q}(-1)^{\Tra_1^n(\be
x^{2^k+1})}=q\cdot\sum\limits_{\stackrel{\al\in
\bF_{2^m}}{x=0}}(-1)^{\Tra_1^m(\al x^{2^m+1})}=2^{3m}.
\end{array}
}
\]
(ii). We can calculate
\[
{ \setlength\arraycolsep{2pt}
\begin{array}{lll}
\sum\limits_{\al\in
\bF_{2^m},\be\in\bF_q}T(\al,\be)^2&=&\sum\limits_{x,y\in
\bF_q}\sum\limits_{\al\in
\bF_{2^m}}(-1)^{\Tra_1^m\left(\al\left(x^{2^m+1}+y^{2^m+1}\right)\right)}\sum\limits_{\be\in
\bF_q}(-1)^{\Tra_1^n\left(\be\left(x^{2^k+1}+y^{2^k+1}\right)\right)}\\[2mm]
&=&M_2\cdot 2^{3m}\end{array}}
\] where
$M_2$ is the number of solutions to the equation
\begin{eqnarray}\label{def 2nd}
\left\{
\begin{array}{ll}
 x^{2^m+1}+y^{2^m+1}=0&\\[2mm]
 x^{2^k+1}+y^{2^k+1}=0&
 \end{array}
 \right.
\end{eqnarray}

If $xy=0$ satisfying (\ref{def 2nd}), then $x=y=0$. Otherwise
$(x/y)^{2^m+1}=(x/y)^{2^k+1}=1$ which yields that
$(x/y)^{2^{m-k}-1}=1$. Denote by $x=ty$.  Since $\gcd(m-k,n)=d'$,
then $t\in \bF_{2^{d'}}^*$.
\begin{itemize}
  \item If $d'=d$, then $t\in \bF_{2^d}^*$ and (\ref{def 2nd}) is
  equivalent to $x=y$. Hence $M_2=1+(q-1)=q$.
  \item If $d'=2d$, then by (\ref{rel d d'}) we get (\ref{def 2nd}) is equivalent to
  $x^{2^d+1}=y^{2^d+1}$. Then we have $t^{2^d+1}=1$ which has
  $2^d+1$ solutions in $\bF_{2^{d'}}^*$. Therefore
  \[M_2=(2^d+1)(2^n-1)+1=2^{n+d}+2^n-2^d.\]
\end{itemize}

(iii). We have
\[
{ \setlength\arraycolsep{2pt}
\begin{array}{lll}
\sum\limits_{\al\in
\bF_{2^m},\be\in\bF_q}T(\al,\be)^3&=&\sum\limits_{x,y,z\in
\bF_q}\sum\limits_{\al\in
\bF_{2^m}}(-1)^{\Tra_1^m\left(\al\left(x^{2^m+1}+y^{2^m+1}+z^{2^m+1}\right)\right)}\sum\limits_{\be\in
\bF_q}(-1)^{\Tra_1^n\left(\be\left(x^{2^k+1}+y^{2^k+1}+z^{2^k+1}\right)\right)}\\[2mm]
&=&M_3\cdot 2^{3m}\end{array}}
\] where
$M_3$ is the number of solutions to the equation
\begin{eqnarray}\label{def 3rd}
\left\{
\begin{array}{ll}
 x^{2^m+1}+y^{2^m+1}+z^{2^m+1}=0&\\[2mm]
 x^{2^k+1}+y^{2^k+1}+z^{2^k+1}=0&
 \end{array}
 \right.
\end{eqnarray}
In the case $xyz=0$, we may assume $z=0$. Then (\ref{def 3rd}) has
$M_2$ solutions. Hence (\ref{def 3rd}) has $3M_2-2$ solutions
satisfying $xyz=0$.

In the case $xyz\neq 0$. Assume $z=1$. By (\ref{def 3rd}) we get
\[
\left(x^{2^k+1}+1\right)^{2^m+1}=\left(y^{2^k+1}\right)^{2^m+1}=\left(y^{2^m+1}\right)^{2^k+1}=\left(x^{2^m+1}+1\right)^{2^k+1}.
\]
Therefore we have Therefore we get
\[(x^{2^{m+k}}+x)(x^{2^m}+x^{2^k})=0.\]
Since $\gcd(m+k,n)=\gcd(m-k,n)=d'$, we have $x\in\bF_{2^{d'}}$.
\begin{itemize}
    \item If $d'=d$, then (\ref{def 3rd}) with $z=1$ reduces to
    $x^2+y^2+1=0$, i.e. $x+y+1=0$ which has $2^d-2$ solutions in
    $\bF_{2^d}^2$ satisfying $xy\neq 0$. Hence
    \[M_3=3M_2-2+(2^d-2)\cdot(2^n-1)=2^{n+d}+2^n-2^d.\]
    \item If $d'=2d$, then (\ref{def 3rd}) with $z=1$ reduces to
\begin{equation}\label{Her}
    x^{2^d+1}+y^{2^d+1}+1=0
    \end{equation}
 with $(x,y)\in \bF_{2^{2d}}^*\times \bF_{2^{2d}}^*$.
 Note that (\ref{Her}) defines a Hermitian curve on $\bF_{2^{2d}}$ which
 has $2^{3d}+1$ rational (projective) points (see \cite{Stic}).
 Since (\ref{Her}) has $2^{d}+1$ infinite points: $(1:t_1:0)$ with $2^{2^d+1}=1$, and $2(2^d+1)$ solutions satisfying $xy=0$: $(t_1,0)$,
 $(0,t_2)$ with $t_1^{2^d+1}=t_2^{2^d+1}=1$, we get
 \[M_3=3M_2-2+\left((2^{3d}+1)-3(2^d+1)\right)\cdot (2^n-1)=2^{n+3d}+2^n-2^{3d}.\]
\end{itemize}

\end{proof}
\begin{remark}
For the case $d'=2d$, $\sum\limits_{\al\in \bF_{2^m},\be\in
\bF_q}T(\al,\be)^3$ can also be determined, but we do not need this
result.
\end{remark}

In the case $d'=2d$, we could determine the explicit values of
$T(\al,\be)$. To this end we need to study a class of Artin-Schreier
curves. A similar technique has been employed in Coulter
\cite{Coul}, Theorem 5.2.

\begin{lemma}\label{Artin}Suppose $(\al',\be)\in
\bF_q^2\big{\backslash}\{0,0\}$ such that $\Tra_m^n(\al')=\al$ and
$d'=2d$. Let $N$ be the number of $\bF_q$-rational (affine) points
on the curve
\begin{equation}\label{Artin Sch}
\al' x^{2^m+1}+\be x^{2^k+1}=y^{2^d}+y.
\end{equation}
 Then
\[N=2^n+(2^d-1)\cdot T(\al,\be).\]
\end{lemma}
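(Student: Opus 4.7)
The plan is to count the affine $\bF_q$-points of the curve $\al' x^{2^m+1} + \be x^{2^k+1} = y^{2^d} + y$ by fixing $x$ and summing over $y$. Using the Artin--Schreier fact that $y^{2^d}+y = c$ has $2^d$ solutions in $\bF_q$ when $\Tra_d^n(c) = 0$ and none otherwise, I would expand the count as a character sum
\[
\#\{y \in \bF_q : y^{2^d} + y = c\} = \sum_{t \in \bF_{2^d}} (-1)^{\Tra_1^d(t\,\Tra_d^n(c))} = \sum_{t \in \bF_{2^d}} (-1)^{\Tra_1^n(tc)},
\]
where the last equality is transitivity of trace combined with $t \in \bF_{2^d}$. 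Taking $c = \al' x^{2^m+1} + \be x^{2^k+1}$, summing over $x$, and interchanging order, the $t=0$ term contributes $q = 2^n$ and yields
\[
N = 2^n + \sum_{t \in \bF_{2^d}^*}\,\sum_{x \in \bF_q} (-1)^{\Tra_1^n(t\al' x^{2^m+1} + t\be x^{2^k+1})}.
\]

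Next I would simplify each inner sum. Since $d \mid m$, every $x \in \bF_q$ satisfies $x^{2^m+1} \in \bF_{2^m}$, and $t \in \bF_{2^d} \subseteq \bF_{2^m}$; by transitivity
\[
\Tra_1^n(t\al' x^{2^m+1}) = \Tra_1^m\bigl(t\,\Tra_m^n(\al')\,x^{2^m+1}\bigr) = \Tra_1^m(t\al\,x^{2^m+1}),
\]
using the hypothesis $\Tra_m^n(\al')=\al$. Thus the inner sum is precisely $T(t\al,t\be)$ from (\ref{def T}), and the lemma reduces to the scaling identity $T(t\al, t\be) = T(\al, \be)$ for every $t \in \bF_{2^d}^*$.

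To establish this identity I would use the fact that in characteristic $2$ the squaring map is bijective on $\bF_{2^d}$, so there exists $\lambda \in \bF_{2^d}^*$ with $\lambda^2 = t^{-1}$. Because $d = \gcd(m,k)$ divides both $m$ and $k$, every element of $\bF_{2^d}$ is fixed by $z \mapsto z^{2^m}$ and $z \mapsto z^{2^k}$, giving $\lambda^{2^m+1} = \lambda^{2^k+1} = \lambda^2 = t^{-1}$. The change of variables $x = \lambda u$ in $T(t\al,t\be)$ then cancels the factor $t$ in both exponents and converts it into $T(\al, \be)$. Summing over $t \in \bF_{2^d}^*$ yields $(2^d-1)T(\al,\be)$, proving the formula. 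The only non-mechanical step is this scaling invariance; the rest is a standard Artin--Schreier character-sum manipulation of the type used in Coulter \cite{Coul}. Note that the hypothesis $d'=2d$ is not actually needed for the argument above and merely records the setting in which this lemma is later invoked.
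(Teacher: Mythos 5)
Your argument is correct, and its first half coincides with the paper's: both reduce the point count to $N=2^n+\sum_{t\in\bF_{2^d}^*}T(t\al,t\be)$ via an Artin--Schreier character-sum expansion (the paper computes $qN$ by summing over all $\om\in\bF_q$ and observes that the $y$-sum kills everything outside $\bF_{q_0}$; you sum over $t\in\bF_{2^d}$ directly, which is the same computation after dividing by $q$). Where you genuinely diverge is in the scaling identity $T(t\al,t\be)=T(\al,\be)$. The paper picks $t'\in\bF_{2^{2d}}$ with $t'^{\,2^d+1}=t$ and needs $d'=2d$ (i.e.\ $m/d$ and $k/d$ both odd) to conclude $t'^{\,2^m+1}=t'^{\,2^k+1}=t'^{\,2^d+1}$; you instead take $\lambda\in\bF_{2^d}^*$ with $\lambda^2=t^{-1}$, so that $\lambda^{2^m+1}=\lambda^{2^k+1}=\lambda^2$ follows merely from $d\mid m$ and $d\mid k$. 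Your version is both more elementary and strictly more general: it shows the conclusion of the lemma holds without the hypothesis $d'=2d$, as you correctly note. (The paper's heavier normalization via the norm from $\bF_{2^{2d}}$ is presumably inherited from the odd-characteristic analogue, where squaring is not a bijection on $\bF_{p^d}^*$ and one cannot extract the needed scalar inside $\bF_{p^d}$.) The one small thing worth making explicit in your write-up is the justification that $t x^{2^m+1}\in\bF_{2^m}$ (namely $x^{2^m+1}$ is the norm of $x$ to $\bF_{2^m}$ and $t\in\bF_{2^d}\subseteq\bF_{2^m}$), which is what licenses pulling $t$ and $x^{2^m+1}$ out of $\Tr_m^n$; you use this implicitly and the paper states it in passing.
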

\begin{proof}
 We get that
\[
\begin{array}{rcl}
qN&=&\sum\limits_{\om\in \bF_q}\sum\limits_{x,y\in
\bF_q}(-1)^{\Tra_1^n\left(\om\left(\al' x^{2^m+1}+\be x^{2^k+1}+y^{2^d}+y\right)\right)}\\[2mm]
&=&q^2+\sum\limits_{\om\in \bF_q^*}\sum\limits_{x\in
\bF_q}(-1)^{\Tra_1^n\left(\om\left(\al' x^{2^m+1}+\be
x^{2^k+1}\right)\right)} \sum\limits_{y\in
\bF_q}(-1)^{\Tra_1^n\left(y^{2^d}\left(\om^{2^d}+\om\right)\right)}\\[2mm]
&=&q^2+q\sum\limits_{\om\in \bF_{q_0}^*}\sum\limits_{x\in
\bF_q}(-1)^{\Tra_1^n\left(\om\left(\al' x^{2^m+1}+\be
x^{2^k+1}\right)\right)}\\[2mm]
&=&q^2+q\sum\limits_{\om\in \bF_{q_0}^*}\sum\limits_{x\in
\bF_q}(-1)^{\Tra_1^m\left(\om\al
x^{2^m+1}\right)+\Tra_1^n\left(\om\be x^{2^k+1}\right)}\\[2mm]
&=&q^2+q\sum\limits_{\om\in \bF_{q_0}^*}T(\om\al,\om\be)
\end{array}
\]
where the 3-rd equality follows from that the inner sum is zero
unless $\om^{2^d}+\om=0$, i.e. $\om\in \bF_{q_0}$ and the 4-th
equality follows from $\om \al' x^{p^m+1}\in \bF_{p^m}$.

For any $\om\in \bF_{q_0}^*$,  choose $t\in \bF_{2^{2d}}$ such that
$t^{2^d+1}=\om$. Then $t^{2^m+1}=t^{2^k+1}=\om$. As a consequence
$T({\om\al,\om\be})=T(\al,\be).$ Hence $N=2^n+(2^d-1)\cdot
T(\al,\be)$.
\end{proof}

Now we give an explicit evaluation of $T(\al,\be)$ in the case
$d'=2d$.
\begin{lemma}\label{reduce num}
Assumptions as in Lemma \ref{Artin}. Then
\[
T(\al,\be)=\left\{
\begin{array}{ll}
-2^m, &\text{if}\; r_{\al,\be}=s\\[2mm]
2^{m+d}, &\text{if}\; r_{\al,\be}=s-2\\[2mm]
-2^{m+2d}, &\text{if}\; r_{\al,\be}=s-4
\end{array}
\right.
\]
\end{lemma}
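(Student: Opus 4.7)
The plan is to combine Lemma \ref{Artin} with the restricted menu of possible values provided by Lemma \ref{qua}. Lemma \ref{qua} already tells us that $T(\al,\be) \in \{0,\pm q_0^{s-r_{\al,\be}/2}\}$, and since $q_0^{s-r/2} = 2^{m+di}$ when $r = s-2i$, the three non-zero possibilities match in absolute value the targets $2^m,2^{m+d},2^{m+2d}$ of the lemma. What remains is to fix the sign and to rule out $T = 0$ at the lower ranks.

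Set $M = \#\{x\in\bF_q: f_{\al,\be}(x)=0\}$. By the fiberwise count of $y$ over each $x$ in the Artin-Schreier cover, $N = 2^d M$, so Lemma \ref{Artin} becomes $M = 2^{n-d} + (1-2^{-d})T(\al,\be)$. On the other hand, $M$ is the number of zeros of the $\bF_{q_0}$-valued quadratic form $F_{\al,\be}$ of rank $r$ on $\bF_{q_0}^s$; the standard characteristic-two formula for this count is $q_0^{s-1} + \eta(q_0^{s-r/2} - q_0^{s-r/2-1})$ with $\eta\in\{-1,0,+1\}$ corresponding to elliptic, degenerate (giving $T=0$), and hyperbolic type respectively. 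Equating the two expressions yields $T(\al,\be) = \eta\, q_0^{s-r/2}$, and the lemma is reduced to showing $\eta = -1,+1,-1$ at ranks $s, s-2, s-4$.

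To anchor the sign at rank $s$ I would compute the special case $\be=0$, $\al\in\bF_{2^m}^*$ by hand: the map $x\mapsto x^{2^m+1}$ is the norm $\bF_q^*\to\bF_{2^m}^*$, which is $(2^m+1)$-to-$1$, so
\[
T(\al,0) = 1 + (2^m+1)\sum_{y\in\bF_{2^m}^*}(-1)^{\Tra_1^m(\al y)} = 1 - (2^m+1) = -2^m,
\]
while the proof of Lemma \ref{rank} shows $r_{\al,0} = s$; this gives $\eta=-1$ at rank $s$. For the remaining ranks I would plug this anchor into the linear system supplied by the three moment identities of Lemma \ref{moment}, the scaling invariance $T(w\al,w\be) = T(\al,\be)$ for $w\in\bF_{q_0}^*$ (established in the proof of Lemma \ref{Artin} and specific to $d'=2d$), and the rank totals $(n_0,n_2,n_4)$ for $d'=2d$, computed by the same template used in the proof of Lemma \ref{rank}, i.e.\ by applying Lemmas \ref{num solution} and \ref{psi} to $\psi_{\al,\be}(z) = 0$.

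The main obstacle is the rigidity of the sign assignment \emph{per rank}: a priori, two pairs $(\al,\be)$ of the same rank could take different $\eta$, and the value $\eta=0$ must be excluded at ranks $s-2$ and $s-4$. Once the anchor $\eta=-1$ at rank $s$ is in hand, I expect the moment system together with the known rank totals to be tight enough to force the alternating pattern $\eta=-1,+1,-1$ simultaneously, excluding $\eta=0$ in the process.
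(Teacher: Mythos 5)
Your reduction to a sign determination is sound (the fibre count $N=2^dM$, the zero-count formula for the quadratic form, and the anchor $T(\al,0)=-2^m$ at rank $s$ are all correct), but the mechanism you propose for propagating the sign to \emph{all} pairs of a given rank does not close, and this is precisely the hard part of the lemma. The statement is pointwise: every $(\al,\be)$ with $r_{\al,\be}=s-2i$ must have $\eta=(-1)^{i+1}$ and $\eta=0$ must be impossible. Moments and rank totals only constrain aggregates. Counting parameters: you have eight sign-and-rank classes $(n_{0,\pm},\,n_{2,\pm},\,n_{2,0},\,n_{4,\pm},\,n_{4,0})$ against six linear relations (three rank totals $n_0,n_2,n_4$ plus the three moments of Lemma \ref{moment}), so the system is underdetermined by two degrees of freedom; the scaling invariance $T(w\al,w\be)=T(\al,\be)$ gives only divisibility information and the anchor gives only a lower bound on $n_{0,-1}$. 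Even if nonnegativity happened to single out the intended solution, you would still have to exhibit that computation (which requires the unproved rank totals $n_0,n_2,n_4$ for $d'=2d$), and a vanishing aggregate count is anyway weaker leverage than what you actually need.

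The ingredient you are missing is a \emph{pointwise} congruence, and the paper extracts it from the very curve of Lemma \ref{Artin}. Since $m/d$ and $k/d$ are both odd, $2^m+1\equiv 2^k+1\equiv 2^d+1\pmod{2^{2d}-1}$, so the maps $(x,y)\mapsto (tx,\,t^{2^d+1}y)$ with $t^{2^{2d}-1}=1$ act on the affine points with $x\neq 0$ in free orbits of size $2^{2d}-1$, while the points with $x=0$ are exactly the $2^d$ points $(0,y)$, $y\in\bF_{q_0}$. Hence $N\equiv 2^d\pmod{2^{2d}-1}$, and $2^n+(2^d-1)T(\al,\be)=N$ gives $T(\al,\be)\equiv 1\pmod{2^d+1}$ for \emph{each individual} $(\al,\be)$. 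Since $2^m\equiv(-1)^{m/d}\equiv -1\pmod{2^d+1}$, one checks $-2^m\equiv 2^{m+d}\equiv -2^{m+2d}\equiv 1$ while $2^m$, $-2^{m+d}$, $2^{m+2d}$ and $0$ are all $\not\equiv 1\pmod{2^d+1}$; combined with the menu $T\in\{0,\pm 2^{m+id}\}$ from Lemma \ref{qua} this forces the alternating signs and excludes $T=0$ at once, with no anchor and no moment computation needed.
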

\begin{proof}
Consider the $\bF_q$-rational (affine) points on the Artin-Schreier
curve in Lemma \ref{Artin}. It is easy to verify that $(0,y)$ with
$y\in \bF_{q_0}$ are exactly the points on the curve with $x=0$. If
$(x,y)$ with $x\neq 0$ is a point on this curve, then so are $(t
x,t^{2^d+1}y)$ with $t^{2^{2d}-1}=1$ (note that $2^m+1\equiv
2^k+1\equiv 2^d+1\pmod {2^{2d}-1}$ since $m/d$ and $k/d$ are both
odd by (\ref{rel d d'})). In total, we have
\[2^n+(2^d-1)T(\al,\be)=N\equiv 2^d\pmod {2^{2d}-1}\]
which yields
\[T(\al,\be)\equiv 1\pmod{2^d+1}.\]

Obviously $T(\al,\be)\neq 0$. We only consider the case
$r_{\al,\be}=s$. The other cases are similar. In this case
$T(\al,\be)=\pm
 2^m$. Assume $T(\al,\be)=2^m$.
 Then ${2^d+1}\mid 2^m-1$ which contradicts to $m/d$ is odd. Therefore $T(\al,\be)=-2^m$.
\end{proof}
\begin{remark}
Applying Lemma \ref{reduce num} to Lemma \ref{Artin}, we could
determine the number of rational points on the curve (\ref{Artin
Sch}).
\end{remark}

\section{Exponential Sums $T(\al,\be)$ and Cyclic Code $\cC_1$}

\quad Recall $q_0^*=(-1)^{\frac{q_0-1}{2}}q_0$.
 In this section we prove the following results.

\begin{theo}\label{value dis T}
The value distribution of the multi-set
$\left\{T(\al,\be)\left|\al\in \bF_{2^m},\be\in
\bF_q\right.\right\}$ and the weight distribution of $\cC_1$ are
shown as following (Column 1 is the value of $T(\al,\be)$, Column
2 is the weight of $c(\al,\be)=\left(\Tra_1^m(\al
\pi^{i(2^m+1)})+\Tra_1^n(\be \pi^{i(2^k+1)})\right)_{i=0}^{q-2}$
and Column 3 is the corresponding multiplicity).

 (i). For the case $d'=d$,
\begin{center}
\begin{tabular}{|c|c|c|}
\hline
value &weight& multiplicity \\[2mm]
\hline $2^{m}$& $2^{n-1}-2^{m-1}$
&$\frac{2^{d-1}(2^m-1)(2^{n}+2^{m+1}+1)}{2^{d}+1} $
\\[2mm]
\hline
 $-2^{m}$& $2^{n-1}+2^{m-1}$&$
\frac{2^{d-1}(2^m-1)(2^{n}-2^{n-d+1}+1)}{2^{d}-1} $
\\[2mm]

\hline
$-2^{m+d}$&$2^{n-1}+2^{m+d-1}$&$\frac{(2^{m-d}-1)(2^n-1)}{2^{2d}-1}$\\[2mm]
\hline $0$&$2^{n-1}$& $2^{m-d}(2^n-1)$
\\[2mm]
\hline $2^n$ & $0$ & $1$\\[2mm]
\hline
\end{tabular}
\end{center}

(ii). For the case $d'=2d$,
\begin{center}
\begin{tabular}{|c|c|c|}
\hline
value &weight& multiplicity \\[2mm]
\hline
$-2^m$&$2^{n-1}+2^{m-1}$&$\frac{2^{3d}(2^m-1)(2^{n}-2^{n-2d}-2^{n-3d}+2^m-2^{m-d}+1)}{(2^d+1)(2^{2d}-1)}$
\\[2mm]
\hline
${2}^{m+d}$&$2^{n-1}-2^{m+d-1}$&$\frac{2^{d}(2^n-1)(2^m+2^{m-d}+2^{m-2d}+1)}{(2^d+1)^2}$
\\[2mm]
\hline
 $-{2}^{m+2d}$&$2^{n-1}-2^{m+2d-1}$&
$\frac{(2^{m-d}-1)(2^{n}-1)}{(2^d+1)(2^{2d}-1)}$
\\[2mm]

\hline $2^m$&$0$&$1$
\\[2mm]
\hline
\end{tabular}
\end{center}
\end{theo}
\begin{proof}
Define
\[N_{i}=\left\{(\al,\be)\in
\bF_{2^m}\times
\bF_q\backslash\{(0,0)\}\left|r_{\al,\be}=s-i\right.\right\}.
\] Then $n_i=\big{|}N_i\big{|}$.

 According to Lemma \ref{qua} (setting
$F(X)=XH_{\al,\be}X^T=\Tra_{d}^m(\al x^{2^m+1})+\Tra_{d}^n(\be
x^{2^k+1})$), we define that for $\varepsilon=\pm 1$ and $0\leq
i\leq s-1$,
\[N_{i,\varepsilon}=\left\{
(\al,\be)\in \bF_{2^m}\times
\bF_q\backslash\{(0,0)\}\left|T(\al,\be)=\veps\cdot
2^{m+\frac{id}{2}} \right.\right\}.\] and
$n_{i,\veps}=|N_{i,\veps}|$. Then by Lemma \ref{qua} we have
$N_0=N_{0,1}\bigcup N_{0,-1}$ and $n_0=n_{0,1}+n_{0,-1}$. But for
general $i$, $N_i\neq N_{i,1}\bigcup N_{i,-1}$ and $n_i\neq
n_{i,1}+n_{i,-1}$.

 Meanwhile we define $\omega$
to be the number of $(\al,\be)\in \bF_{2^m}\times
\bF_q\big{\backslash}\{(0,0)\}$ such that $T(\al,\be)=0$.

From (\ref{Wei}) we know that for each non-zero codeword
$c(\al,\be)=\left(c_0,\cdots,c_{l-1}\right)$ $(l=q-1,
c_i=\Tra_1^m(\al\pi^{(2^m+1)i})+\Tra_1^n(\be \pi^{(2^k+1)i}), 0\leq
i\leq l-1, \text{and}\; (\al,\be)\in \bF_{2^m}\times\bF_q)$, the
Hamming weight of $c(\al,\be)$ is
\begin{equation}\label{wei c}
w_H\left(c(\al,\be)\right)=2^{n-1}-\frac{1}{2}\cdot T(\al,\be).
\end{equation}

 (i). For the case $d'=d$, by Lemma \ref{qua} and Lemma \ref{rank}
 we know that the possible values of $T(\al,\be)$ for $(\al,\be)\in \bF_{2^m}\times \bF_q\big{\backslash}\{(0,0)\}$ is $0, \pm 2^m, \pm
 2^{m+d}$.
 Moreover from Lemma \ref{moment} we have
\begin{equation}\label{par sum1}
 \left(n_{0,1}-n_{0,-1}\right)+2^d\left(n_{2,1}-n_{2,-1}\right)=2^m(2^m-1)
 \end{equation}
\begin{equation}\label{par sum2}
 \left(n_{0,1}+n_{0,-1}\right)+2^{2d}\left(n_{2,1}+n_{2,-1}\right)=2^{n}(2^m-1)
 \end{equation}
\begin{equation}\label{par sum3}
 \left(n_{0,1}-n_{0,-1}\right)+2^{3d}\left(n_{2,1}-n_{2,-1}\right)=-2^{3m}+2^{n+d}+2^n-2^d.
 \end{equation}
In addition, by Lemma \ref{qua} and Lemma \ref{rank} we have
\begin{equation}\label{par sum0}
 \omega+\left(n_{0,1}+n_{0,-1}\right)+\left(n_{2,1}+n_{2,-1}\right)=2^{3m}-1.
 \end{equation}
and
   \begin{equation}\label{val n0'}
   n_{0,1}+n_{0,-1}=n_0=\frac{\left(2^{n+2d}-2^{n+d}-2^n+2^{m+2d}-2^{m+d}+2^{2d}\right)(2^m-1)}{2^{2d}-1}.
   \end{equation}
Combining (\ref{par sum1})-- (\ref{val n0'}) we get the result.

 (iii). For the case $d'=2d$, by Lemma \ref{reduce num} we have
 \begin{equation}\label{val all 0}
\omega=n_{0,1}=n_{2,-1}=n_{4,1}=0.
 \end{equation}
Combining Lemma \ref{rank}, Lemma \ref{moment}, Lemma (\ref{reduce
num}) and (\ref{val all 0}) we have
\begin{equation}\label{par sum02}
 n_{0,-1}+n_{2,1}+n_{4,-1}=2^{3m}-1
 \end{equation}

\begin{equation}\label{par sum12}
 -n_{0,-1}+2^d\cdot n_{2,1}-2^{2d}\cdot n_{4,-1}=2^m(2^{m}-1)
 \end{equation}

 \begin{equation}\label{par sum22}
 n_{0,-1}+2^{2d}\cdot n_{2,1}+2^{4d}\cdot
 n_{4,-1}=2^m(2^{n+d}+2^{n}-2^m-2^d).
 \end{equation}

Solving the system of equations consisting of (\ref{par
sum02})--(\ref{par sum22}) yields the result.

The weight distribution of $\cC_1$ is derived from the value
distribution of $T(\al,\be)$ and (\ref{wei c}).
\end{proof}

\section{Results on Correlation Distribution of Sequences and Cyclic Code $\cC_2$}

\quad Recall $\phi_{\al,\be}(x)$ in the proof of Lemma \ref{rank}
and $N_{i,\veps}$ in the proof of Theorem \ref{value dis T}.
Finally we will determine the value distribution of
$S(\al,\be,\ga)$, the correlation distribution among sequences in
$\calF$ defined in (\ref{def F}) and the weight distribution of
$\cC_2$ defined in Section 1.

The following lemma, which has been stated for the case $p$ odd in
\cite{Zen Li} Lemma 5, is also valid for $p=2$.

\begin{lemma}\label{q-2}
\begin{itemize}
\item[(i).] The sequences in $\calF$ are all maximal with length $q-1$.
  \item[(i).] For any given $\al\in \bF_{2^m}^*$, when $\be$ runs through
$\bF_q$, the distribution of $T(\al,\be)$ is the
same as $T(1,\be)$.
  \item[(ii).] For any given $\ga\in \bF_q^*$, when $(\al,\be)$ runs through
$\bF_{p^m}\times \bF_q$, the distribution of $S(\al,\be,\ga)$ is the
same as $S(\al,\be,1)$.
  \item[(iii).] Suppose $m/d$ or $k/d$ is even. If $\be$ runs through $\bF_q^*$,  then $C_{\be}$ runs through the sequences
  in $\calF$ exactly $(2^m+1)$ times.
\end{itemize}

\end{lemma}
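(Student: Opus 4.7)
The plan is to handle each of the four assertions by an appropriate change of variable in the exponential sum, together with an orbit-counting argument for the last claim. The key underlying fact, used throughout, is that $\mu \mapsto \mu^{2^m+1}$ is the surjective norm map $\bF_q^* \to \bF_{2^m}^*$.

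For the distribution statement on $T(\al,\be)$, substituting $x = \mu y$ in (\ref{def T}) gives
\[
T(\al,\be) \;=\; T(\al\mu^{2^m+1},\,\be\mu^{2^k+1}).
\]
Given $\al \in \bF_{2^m}^*$, pick $\mu \in \bF_q^*$ with $\mu^{2^m+1} = \al^{-1}$ (possible by surjectivity of the norm); then $T(\al,\be) = T(1,\be\mu^{2^k+1})$, and since $\be \mapsto \be\mu^{2^k+1}$ is a bijection of $\bF_q$, the multisets $\{T(\al,\be)\}_\be$ and $\{T(1,\be)\}_\be$ coincide. For the analogous statement on $S(\al,\be,\ga)$, substitute $x = \ga^{-1} y$ in (\ref{def S}) to obtain
\[
S(\al,\be,\ga) \;=\; S(\al\ga^{-(2^m+1)},\,\be\ga^{-(2^k+1)},\,1).
\]
Since $\ga^{-(2^m+1)} \in \bF_{2^m}^*$ (again by the norm map), the pair $(\al,\be) \mapsto (\al\ga^{-(2^m+1)},\be\ga^{-(2^k+1)})$ is a bijection of $\bF_{2^m} \times \bF_q$, giving the required equality of distributions.

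For the maximality claim, each family $\calF_i$ consists of trace-type sequences whose defining function has a non-zero term on one of the exponents $1$, $2^k+1$, or $2^m+1$; by (\ref{deg}) the minimal polynomials over $\bF_2$ of the corresponding $\pi^{-e}$ have degrees $n$, $n$, $m$, which forces the minimal period of each non-zero sequence to be exactly $q-1$.

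For the final multiplicity claim I read $C_\be$ as the sequence $a_\be \in \calF_2$. Two parameters $\be,\be' \in \bF_q^*$ give cyclically equivalent $a_\be,a_{\be'}$ if and only if there exists $\tau$ with $\pi^{\tau(2^m+1)} = 1$ (forced by matching the $\Tra_1^m$-term, since $\pi^{\lambda(2^m+1)}$ generates $\bF_{2^m}^*$) and $\be' = \be\pi^{\tau(2^k+1)}$ (matching the $\Tra_1^n$-term). The first condition gives $\tau = j(2^m-1)$ for $j = 0,\ldots,2^m$. Under the hypothesis that $m/d$ or $k/d$ is even, the standard identity $\gcd(2^m+1,2^k+1) = 1$ yields $\gcd(2^n-1,(2^m-1)(2^k+1)) = 2^m-1$, so $\pi^{(2^m-1)(2^k+1)}$ has order exactly $2^m+1$ in $\bF_q^*$. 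Hence the cyclic group it generates acts freely on $\bF_q^*$, producing orbits of size $2^m+1$ and a total of $(2^n-1)/(2^m+1) = 2^m-1 = |\calF_2|$ orbits. The main obstacle is this order/gcd calculation, which is precisely where the parity hypothesis is used: without it one would have $\gcd(2^m+1,2^k+1) = 2^d+1$ and the orbits could collapse to a proper divisor of $2^m+1$, destroying the multiplicity count.
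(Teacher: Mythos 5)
The paper offers no proof of this lemma at all (it simply cites \cite{Zen Li}, Lemma 5, as carrying over to $p=2$), so your argument has to stand on its own. Three of your four parts do. The substitutions are the standard route and are carried out correctly: $x\mapsto\mu x$ gives $T(\al,\be)=T(\al\mu^{2^m+1},\be\mu^{2^k+1})$, surjectivity of the norm $\mu\mapsto\mu^{2^m+1}$ onto $\bF_{2^m}^*$ lets you normalize $\al$ to $1$, and $x\mapsto\ga^{-1}x$ normalizes the linear term of $S$; in each case the induced parameter map is a bijection, so the distributions agree. The orbit count in the last item is also sound: cyclic equivalence of $a_{\be}$ and $a_{\be'}$ forces $\pi^{\tau(2^m+1)}=1$ and $\be'=\be\pi^{\tau(2^k+1)}$ because the exponents $2^m+1$ and $2^k+1$ lie in distinct cyclotomic cosets (their minimal polynomials $h_3$, $h_2$ have different degrees), and your gcd computation $\gcd\bigl(2^n-1,(2^m-1)(2^k+1)\bigr)=2^m-1$ under the parity hypothesis correctly identifies $\bigl\langle\pi^{(2^m-1)(2^k+1)}\bigr\rangle$ as the unique subgroup of order $2^m+1$; its cosets are represented exactly once each by $\pi^0,\dots,\pi^{2^m-2}$, giving multiplicity $2^m+1$.

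The gap is in the maximality claim. The degree of the minimal polynomial of $\pi^{-e}$ does not control the period of $\bigl(\Tra(c\,\pi^{\lambda e})\bigr)_{\lambda}$; that period is the multiplicative order of $\pi^{e}$, namely $(2^n-1)/\gcd(2^n-1,e)$. For $e=2^m+1$ this is $2^m-1$, and for $e=2^k+1$ with $m/d$ even it is $(2^n-1)/(2^d+1)<q-1$ even though $\deg h_2=n$. Hence for a sequence in $\calF_2$ \emph{neither} term alone has period $q-1$, and the inference you state fails. The conclusion is still true, but it needs the lcm of the two component periods: when $m/d$ is even one has $2^d+1\mid 2^m-1$ and $\gcd(2^d+1,2^m+1)=1$, so $\mathrm{lcm}\bigl(2^m-1,\ (2^n-1)/(2^d+1)\bigr)=(2^m-1)(2^m+1)=q-1$; when $k/d$ is even, $\gcd(2^k+1,2^n-1)=1$ and the $\be$-term alone already has period $q-1$, as does the sole term of the $\calF_3$ sequence. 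Only for $\calF_1$ does your argument work as written, since the exponent-$1$ term has period $q-1$ outright.
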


 We are now ready to give the value
distribution of $S(\al,\be,\ga)$ and weight distribution of
$\cC_2$.

\begin{theo}\label{value dis S}
The value distribution of the multi-set
$\left\{S(\al,\be,\ga)\left|\al\in \bF_{2^m},(\be,\ga)\in
\bF_q^2\right.\right\}$ and the weight distribution of $\cC_2$ are
shown as following (Column 1 is the value of $S(\al,\be,\ga)$,
Column 2 is the weight of $c(\al,\be,\ga)=\left(\Tra_1^m(\al
\pi^{i(2^m+1)})+\Tra_1^n(\be \pi^{i(2^k+1)}+\ga
\pi^i)\right)_{i=0}^{q-2}$ and Column 3 is the corresponding
multiplicity).

 (i). For the case $d'=d$,
\begin{center}
\begin{tabular}{|c|c|c|}
\hline
value &weight& multiplicity \\[2mm]
\hline
$2^{m}$&$2^{n-1}-2^{m-1}$&$\frac{2^{m-1}(2^n-1)(2^{n+2d}-2^{n+d}-2^n+2^{m+2d}-2^{m+d}+2^{2d})}{2^{2d}-1}$
\\[2mm]
\hline
$-2^{m}$&$2^{n-1}+2^{m-1}$&$\frac{2^{m-1}(2^m-1)^2(2^{n+2d}-2^{n+d}-2^n+2^{m+2d}-2^{m+d}+2^{2d})}{2^{2d}-1}$
\\[2mm]

\hline
$2^{m+d}$&$2^{n-1}-2^{m+d-1}$&$\frac{2^{m-d-1}(2^{m-d}+1)(2^{m+d}-1)(2^n-1)}{2^{2d}-1}$\\[2mm]
\hline
$-2^{m+d}$&$2^{n-1}+2^{m+d-1}$&$\frac{2^{m-d-1}(2^{m-d}-1)(2^{m+d}-1)(2^n-1)}{2^{2d}-1}$\\[2mm]
\hline $0$&$2^{n-1}$&$(2^{3m-d}-2^{n-2d}+1)(2^n-1)$
\\[2mm]
\hline $2^n$ & $0$ & $1$\\[2mm]
\hline
\end{tabular}
\end{center}

(ii). For the case $d'=2d$,
\begin{center}
\begin{tabular}{|c|c|c|}
\hline
value &weight& multiplicity \\[2mm]
\hline
$2^m$&$2^{n-1}-2^{m-1}$&$\frac{2^{m+3d-1}(2^n-1)(2^{n}-2^{n-2d}-2^{n-3d}+2^m-2^{m-d}+1)}{(2^d+1)(2^{2d}-1)}$
\\[2mm]
\hline
$-2^m$&$2^{n-1}+2^{m-1}$&$\frac{2^{m+3d-1}(2^m-1)^2(2^{n}-2^{n-2d}-2^{n-3d}+2^m-2^{m-d}+1)}{(2^d+1)(2^{2d}-1)}$
\\[2mm]
\hline
${2}^{m+d}$&$2^{n-1}-2^{m+d-1}$&$\frac{2^{m-1}(2^{m-d}+1)(2^n-1)(2^m+2^{m-d}+2^{m-2d}+1)}{(2^d+1)^2}$
\\[2mm]
\hline
$-{2}^{m+d}$&$2^{n-1}+2^{m+d-1}$&$\frac{2^{m-1}(2^{m-d}-1)(2^n-1)(2^m+2^{m-d}+2^{m-2d}+1)}{(2^d+1)^2}$
\\[2mm]
\hline
 ${2}^{m+2d}$&$2^{n-1}-2^{m+2d-1}$&
$\frac{2^{m-2d-1}(2^{m-2d}+1)(2^{m-d}-1)(2^{n}-1)}{(2^d+1)(2^{2d}-1)}$
\\[2mm]
\hline
 $-{2}^{m+2d}$&$2^{n-1}+2^{m+2d-1}$&
$\frac{2^{m-2d-1}(2^{m-2d}-1)(2^{m-d}-1)(2^{n}-1)}{(2^d+1)(2^{2d}-1)}$
\\[2mm]
\hline $0$&$2^{n-1}$&$\scriptstyle{(2^{n}-1)(2^{3m-d}-2^{3m-2d}+2^{3m-3d}-2^{3m-4d}+2^{3m-5d}+2^{n-d}-2^{n-2d+1}+2^{n-3d}-2^{n-4d}+1)}$\\[2mm]

\hline $2^m$&$0$&$1$
\\[2mm]
\hline
\end{tabular}
\end{center}
\end{theo}

\begin{proof}
Define
\[\Xi=\left\{(\al,\be,\ga)\in \bF_q^3\left|S(\al,\be,\ga)=0 \right.\right\}\]
and $\xi=\big{|}\Xi\big{|}$.

 Recall $n_i,H_{\al,\be},,r_{\al,\be},A_{\ga}$ in Section 1 and
$N_{i,\veps},n_{i,\veps,}$ in the proof of Lemma \ref{rank}. From
Lemma \ref{det gamma}, if $(\al,\be)\in N_{i,\veps}$, then the
number of $\ga\in \bF_q$ such that $S(\al,\be,\ga)=0$ is
$q_0^s-q_0^{i}$.
 From
Lemma \ref{rank} and Theorem \ref{value dis T} we know that
\begin{itemize}
  \item if $d'=d$ and $(\al,\be)\neq (0,0)$, then
$r_{\al,\be}=s-i$ for some $i\in\{0,1,2\}$. By Lemma \ref{qua} we
have
\begin{equation}\label{value xi1}
{\setlength\arraycolsep{2pt}
\begin{array}{lll}
\xi&=&2^{n}-1+(2^{n}-2^{n-2d})n_2=(2^{3m-d}-2^{n-2d}+1)(2^n-1).\\[2mm]
\end{array}
}
\end{equation}
  \item if $d'=2d$, similarly we have
\begin{equation}\label{value xi2}
{\setlength\arraycolsep{2pt}
\begin{array}{lll}
\xi&=&2^{n}-1+(2^{n}-2^{n-2d})n_{2,1}+(2^{n}-2^{n-4d})n_{4,-1}\\[2mm]
&=&(2^{n}-1)(2^{3m-d}-2^{3m-2d}+2^{3m-3d}-2^{3m-4d}+2^{3m-5d}\\[1mm]
&&\qquad+2^{n-d}-2^{n-2d+1}+2^{n-3d}-2^{n-4d}+1).
\end{array}
}
\end{equation}
\end{itemize}
From (\ref{Wei}) we know that for each non-zero codeword
$c(\al,\be,\ga)=\left(c_0,\cdots,c_{n-1}\right)$ $(n=2^n-1,
c_i=\Tra_{1}^m(\al\pi^{(2^m+1)i})+\Tra_{1}^n(\be \pi^{(2^k+1)i}+\ga
\pi^{i}),\, 0\leq i\leq q-2,\, \text{and}\; (\al,\be,\ga)\in
\bF_{2^m}\times \bF_q^2)$, the Hamming weight of $c(\al,\be,\ga)$ is
\begin{equation}\label{wei c2}
w_H\left(c(\al,\be,\ga)\right)=p^{n-1}-\frac{1}{2}\cdot
S(\al,\be,\ga)
\end{equation}

Combining Lemma \ref{qua} and Theorem \ref{value dis T} we get the
value distribution of $S(\al,\be,\ga)$. As a consequence we have the
weight distribution of $\cC_2$ from (\ref{wei c2}).
\end{proof}

\begin{theo}\label{cor dis}
The collection $\calF$ defined in (\ref{def F}) is a family of
$p$-ary sequences with period $q-1$.
\begin{itemize}
  \item[(i).] If $m/d$ is even, then $\calF$ has family size $2^{3m}+2^m-1$ and correlation distribution as follows.

\begin{center}
\begin{tabular}{|c|c|}
\hline
values & multiplicity \\[2mm]
\hline $\scriptstyle{2^{m}-1}$&$
\begin{array}{ll}
\scriptstyle{(2^{4n+2d-1}-2^{4n+d-1}-2^{4n-1}+2^{7m+2d-1}-2^{7m+d-1}+2^{3n+2d-1}-2^{5m+2d}+2^{5m+d}+2^{5m}}&\\[1mm]
\scriptstyle{\quad -
2^{2n+2d+1}+2^{2n+d+1}+2^{2n}-2^{3m+2d}-2^{3m}+2^{n+2d}-
2^{n+d+1}+2^{m+2d+1}-2^{m+d}-2^{2d}+2^{d})\large{/}(2^{2d}-1)}&
\end{array}
$
\\[2mm]
\hline $\scriptstyle{-2^{m}-1}$&$
\begin{array}{ll}
\scriptstyle{(2^{4n+2d-1}-2^{4n+d-1}-2^{4n-1}-2^{7m+2d-1}+2^{7m+d-1}+2^{7m}+2^{3n+2d-1}-2^{3n}-2^{5m+2d+1}+2^{5m+d}+2^{5m+1}}&\\[1mm]
\scriptstyle{\quad +2^{2n+2d}-2^{2n+1}-2^{3m+d+1}+2^{n+2d}+ 2^{n+d}+
2^n-2^{m}-2^{2d}+2^{d}+2)\large{/}(2^{2d}-1)}&
\end{array}
$
\\[2mm]

\hline
$\scriptstyle{2^{m+d}-1}$&$\scriptstyle{\frac{2^{m-d}(2^{m-d}+1)(2^{m+d}-1)(2^{5m-1}-2^{n}-2^m+1)}{2^{2d}-1}}$\\[2mm]
\hline $\scriptstyle{-2^{m+d}-1}$&$
\begin{array}{ll}
\scriptstyle{(2^{4n-d-1}-2^{7m-1}-2^{7m-2d-1}+2^{3n-d-1}-2^{5m-d}+
2^{2n}-2^{2n-d}+2^{2n-2d}}&\\[1mm]
\scriptstyle{\quad\qquad +2^{3m}+2^{3m-2d}+2^{n+d}-
2^{n+1}-2^{n-d}-2^{n-2d}+3\cdot
2^{m-d}-2^{d+1})\large{/}(2^{2d}-1)}&
\end{array}$\\[2mm]
\hline $\scriptstyle{-1}$&$
\begin{array}{ll}
\scriptstyle{2^{4n-d}-2^{7m-2d}+2^{5m}-2^{5m-d+1}-2^{2n-d+1}+2^{2n-2d+1}}&\\[1mm]
\qquad\quad
\scriptstyle{+2^{3m-d+1}+2^{3m-2d+1}-2^{n+1}-2^{n-2d+1}-2^{m+1}+2^{m-d+1}+2}&
\end{array}
$
\\[2mm]
\hline $\scriptstyle{2^n-1}$ & $\scriptstyle{2^{3m}+2^m-1}$ \\[2mm]
\hline
\end{tabular}
\end{center}

  \item[(ii).] If $k/d$ is even, then $\calF$ has family size $2^{3m}+2^m$ and correlation distribution as follows.

\begin{center}
\begin{tabular}{|c|c|}
\hline
values & multiplicity \\[2mm]
\hline $\scriptstyle{2^{m}-1}$&$
\scriptstyle{\frac{2^{4n+2d-1}-2^{4n+d-1}-2^{4n-1}+2^{7m+2d-1}-2^{7m+d-1}+2^{3n+2d-1}-2^{2n+2d}+2^{2n+d}+2^{2n}-2^{3m+2d}+2^{3m+d}-2^{n+2d}}{2^{2d}-1}}
$
\\[2mm]
\hline $\scriptstyle{-2^{m}-1}$&$
\begin{array}{ll}
\scriptstyle{\left(2^{4n+2d-1}-2^{4n+d-1}-2^{4n-1}-2^{7m+2d-1}+2^{7m+d-1}+2^{7m}+2^{3n+2d-1}-2^{3n}\right.}&\\[1mm]
{\quad\scriptstyle{\left.-2^{5m+2d}+2^{5m}+2^{2n+d}-2^{3m+d}-2^{3m}+
2^n+2^{m+2d}-2^{m}\right)}}\large{/}\scriptstyle{(2^{2d}-1)}&
\end{array}
$
\\[2mm]

\hline
$\scriptstyle{2^{m+d}-1}$&$\scriptstyle{\frac{2^{n-d}(2^{m-d}+1)(2^{m+d}-1)(2^{2n-1}-1)}{2^{2d}-1}}$\\[2mm]
\hline $\scriptstyle{-2^{m+d}-1}$&$
\scriptstyle{\frac{2^{n-d}(2^{m-d}-1)(2^{m+d}-1)(2^{2n-1}-1)}{2^{2d}-1}}$\\[2mm]
\hline $\scriptstyle{-1}$&$
\scriptstyle{2^{4n-d}-2^{7m-2d}+2^{5m}-2^{2n-d+1}+2^{3m-2d+1}-2^{m+1}}
$
\\[2mm]
\hline $\scriptstyle{2^n-1}$ & $\scriptstyle{2^{3m}+2^m}$ \\[2mm]
\hline
\end{tabular}
\end{center}

  \item[(iii).] If $m/d$ and $k/d$ are both odd(that is, $d'=2d$), then $\calF$ has family size $2^{3m}$ and correlation distribution as follows.

\begin{center}
\begin{tabular}{|c|c|}
\hline
values & multiplicity \\[2mm]
\hline
$\scriptstyle{2^m}$&$\scriptstyle{\frac{2^{2n+3d-1}(2^n-2)(2^{n}-2^{n-2d}-2^{n-3d}+2^m-2^{m-d}+1)}{(2^d+1)(2^{2d}-1)}}$
\\[2mm]
\hline
$\scriptstyle{-2^m}$&$\scriptstyle{\frac{2^{3m+3d}(2^{3m-1}-2^n+1)(2^{n}-2^{n-2d}-2^{n-3d}+2^m-2^{m-d}+1)}{(2^d+1)(2^{2d}-1)}}$
\\[2mm]
\hline
$\scriptstyle{{2}^{m+d}}$&$\scriptstyle{\frac{2^{3m}(2^{2n-d-1}+2^{3m-1}-2^{n-d}-2^m+2^d)(2^m+2^{m-d}+2^{m-2d}+1)}{(2^d+1)^2}}$
\\[2mm]
\hline
$\scriptstyle{-{2}^{m+d}}$&$\scriptstyle{\frac{2^{2n-1}(2^{m-d}-1)(2^n-2)(2^m+2^{m-d}+2^{m-2d}+1)}{(2^d+1)^2}}$
\\[2mm]
\hline
 $\scriptstyle{{2}^{m+2d}}$&
$\scriptstyle{\frac{2^{2n-2d-1}(2^{m-2d}+1)(2^{m-d}-1)(2^{n}-2)}{(2^d+1)(2^{2d}-1)}}$
\\[2mm]
\hline
 $\scriptstyle{-{2}^{m+2d}}$&
$\scriptstyle{\frac{2^{3m}(2^{m-d}-1)(2^{2n-2d-1}-2^{3m-2d-1}-2^{n-2d}+2^{m-2d}+1)}{(2^d+1)(2^{2d}-1)}}$
\\[2mm]
\hline $\scriptstyle{0}$&$\scriptstyle{2^{3m}(2^{n}-2)(2^{3m-d}-2^{3m-2d}+2^{3m-3d}-2^{3m-4d}+2^{3m-5d}+2^{n-d}-2^{n-2d+1}+2^{n-3d}-2^{n-4d}+1)}$\\[2mm]

\hline $\scriptstyle{2^m}$&$\scriptstyle{2^{3m}}$
\\[2mm]
\hline
\end{tabular}
\end{center}

\end{itemize}
\end{theo}
\begin{proof}

For any possible value $\ka$ and $1\leq i, j\leq 3$, define
$M_{\ka}(\calF_i, \calF_j)$ to be the frequency of $\ka$ in
correlation values
 between two sequences in $\calF_i$ and $\calF_j$ by any shift, respectively.  Then the correlation distribution of sequences in
 $\calF$
could be obtained if we can calculate all of the $M_{\ka}(\calF_i,
\calF_j)$ . We will deal with it case by case.
\begin{itemize}
  \item The correlation function between
$a_{\al_1,\be_1}$ and $a_{\al_2,\be_2}$ by a shift $\tau$ ($0\leq
\tau\leq q-2$) is
\[
\begin{array}{ll}
&C_{(\al_1,\be_1),(\al_2,\be_2)}(\tau)=\sum\limits_{\lambda=0}^{q-2}(-1)^{a_{\al_1,\be_1}({\lambda})-
a_{\al_2,\be_2}({\lambda+\tau})}\\[2mm]
&\qquad =\sum\limits_{\lambda=0}^{q-2}(-1)^{\Tra_1^m(\al_1
\pi^{\lambda(2^m+1)})+\Tra_1^n(\be_1
  \pi^{\lambda(2^k+1)}+\pi^{\lambda})-\Tra_1^m(\al_2 \pi^{(\lambda+\tau)(2^m+1)})-\Tra_1^n(\be
  \pi^{(\lambda+\tau)(2^k+1)}+\pi^{\lambda+\tau})}\\[2mm]
  &\qquad = S(\al',\be',\ga')-1
  \end{array}
\]
 where
 \begin{equation}\label{coe cor}
 \al'=\al_1-\al_2 \pi^{\tau(2^m+1)},\quad
 \be'=\be_1-\be_2\pi^{\tau(2^k+1)},\quad \ga'=1-\pi^{\tau}.
 \end{equation}

 Fix $(\al_2,\be_2)\in
\bF_{2^m}\times \bF_q$, when $(\al_1,\be_1)$ runs through
$\bF_{2^m}\times \bF_q$ and $\tau$ takes values from $0$ to $q-2$,
$(\al',\be',\ga')$ runs through $\bF_{2^m}\times
\bF_q\times\left\{\bF_{q}\big{\backslash}\{1\}\right\}$ exactly one
time.

For any possible value $\kappa$ of $S(\al,\be,\ga)$, define

\begin{equation}\label{def sk}
s_{\kappa}=\#\left\{(\al,\be,\ga)\in \bF_{2^m}\times \bF_q\times
\bF_q\,\displaystyle{|}\,S(\al,\be,\ga)=\kappa+1\right\}
\end{equation}

\begin{equation}\label{def sk1}s^1_{\kappa}=\#\left\{(\al,\be)\in \bF_{2^m}\times
\bF_q\,\big{|}\,S(\al,\be,1)=\kappa+1\right\}
\end{equation}
 and
\begin{equation}\label{def tk}t_{\kappa}=\#\left\{(\al,\be)\in \bF_{2^m}\times
\bF_q\,\displaystyle{|}\,T(\al,\be)=\kappa+1\right\}.
\end{equation}

By Lemma \ref{q-2} we have
\begin{equation}\label{rel sk1 sk tk}
s_{\kappa}^1=\frac{1}{2^n-1}\times (s_{\kappa}-t_{\kappa}).
\end{equation}

Hence we get
\[M_{\kappa}(\calF_1,\calF_1)=2^{3m}\cdot \left(s_{\kappa}-s_{\kappa}^1\right)=2^{3m}\cdot\left(\frac{2^n-2}{2^n-1}\cdot s_{\kappa}+\frac{1}{2^n-1}\cdot t_{\kappa}\right).\]

  \item For the case $m/d$ or $k/d$ is even.  The cross correlation function between
$a_{\al_1,\be_1}$ and $a_{\be_2}$ by a shift $\tau$ ($0\leq
\tau\leq q-2$) is
\[
\begin{array}{ll}
&C_{(\al_1,\be_1),\be_2}(\tau)=\sum\limits_{\lambda=0}^{q-2}(-1)^{a_{\al_1,\be_1}({\lambda})-
a_{\be_2}({\lambda+\tau})}\\[2mm]
&\qquad =\sum\limits_{\lambda=0}^{q-2}(-1)^{\Tra_1^m(\al_1
\pi^{\lambda(2^m+1)})+\Tra_1^n(\be_1
  \pi^{\lambda(2^k+1)}+\pi^{\lambda})-\Tra_1^m(\pi^{(\lambda+\tau)(2^m+1)})-\Tra_1^n(\be_2
  \pi^{(\lambda+\tau)(2^k+1)})}\\[2mm]
  &\qquad = S(\al',\be',1)-1
  \end{array}
\]
 where
 $
 \al'=\al_1- \pi^{\tau(2^m+1)},
 \be'=\be_1-\be_2\pi^{\tau(2^k+1)}.
$

 Fix $0\leq \tau\leq q-2$ and $\be_2=\pi^{i}$ for some $0\leq i\leq 2^m-2$, when $(\al_1,\be_1)$ runs through
$\bF_{2^m}\times \bF_q$, $(\al',\be')$ runs through $\bF_{2^m}\times
\bF_q$ exactly one time.  By (\ref{def sk}) and (\ref{def sk1}) we
get
\[M_{\kappa}(\calF_1,\calF_2)=M_{\kappa}(\calF_2,\calF_1)=(q-1)(2^{m}-1)\cdot s_{\kappa}^1=(2^m-1)(s_{\ka}-t_{\ka}).\]

  \item For the case $k/d$ is even.  The cross correlation function between
$a_{\al_1,\be_1}$ and $a$ by a shift $\tau$ ($0\leq \tau\leq q-2$)
is
\[
\begin{array}{ll}
&C_{(\al_1,\be_1)}(\tau)=\sum\limits_{\lambda=0}^{q-2}(-1)^{a_{\al_1,\be_1}({\lambda})-
a({\lambda+\tau})}\\[2mm]
&\qquad =\sum\limits_{\lambda=0}^{q-2}(-1)^{\Tra_1^m(\al_1
\pi^{\lambda(2^m+1)})+\Tra_1^n(\be_1
  \pi^{\lambda(2^k+1)}+\pi^{\lambda})-\Tra_1^n(
  \pi^{(\lambda+\tau)(2^k+1)})}\\[2mm]
  &\qquad = S(\al_1,\be',1)-1
  \end{array}
\]
 where
 $
 \be'=\be_1-\pi^{\tau(2^k+1)}.
 $

For fixed $\tau$, $0\leq \tau\leq q-2$, when $\be_1$ runs through
$\bF_q$, $\be'$ runs through $\bF_q$ exactly one time.
 By (\ref{def sk}) and (\ref{def sk1}) we get
\[M_{\kappa}(\calF_1,\calF_3)=M_{\kappa}(\calF_3,\calF_1)=(q-1)\cdot s_{\kappa}^1={s_{\ka}-t_{\ka}}.\]

  \item For the case $m/d$ or $k/d$ is even.  The cross correlation function between
$a_{\be_1}$ and $a_{\be_2}$ by a shift $\tau$ ($0\leq
\tau\leq q-2$) is
\[
\begin{array}{ll}
&C_{\be_1,\be_2}(\tau)=\sum\limits_{\lambda=0}^{q-2}(-1)^{a_{\be_1}({\lambda})-
a_{\be_2}({\lambda+\tau})}\\[2mm]
&\qquad =\sum\limits_{\lambda=0}^{q-2}(-1)^{\Tra_1^m(
\pi^{\lambda(2^m+1)})+\Tra_1^n(\be_1
  \pi^{\lambda(2^k+1)})-\Tra_1^m(\pi^{(\lambda+\tau)(2^m+1)})-\Tra_1^n(\be_2
  \pi^{(\lambda+\tau)(2^k+1)})}\\[2mm]
  &\qquad = T(\al',\be')-1
  \end{array}
\]
 where
 $
 \al'=1- \pi^{\tau(2^m+1)},
 \be'=\be_1-\be_2\pi^{\tau(2^k+1)}.
$

When $(\be_1,\be_2)$ runs through $\bF_{q}\times \bF_q$ and $\tau$
takes value from $0$ to $q-2$, $(\al',\be')$ runs through
$\bF_{2^m}\large{\backslash}\{1\}\times \bF_q$ exactly $(2^m+1)\cdot
q$ times.

Fix $\be_2=0$. When $\be_1$ runs through $\bF_{q}$ and $\tau$ takes
value from $0$ to $q-2$, $(\al',\be')$ runs through
$\bF_{2^m}\large{\backslash}\{1\}\times \bF_q$ exactly $(2^m+1)$
times. By symmetry,  this statement is also valid if we exchange
$\be_1$ and $\be_2$ to each other.

When $\be_1=\be_2=0$ and $\tau$ takes value from $0$ to $q-2$, then
$\be'=0$ and $\al'$ runs through $\bF_{2^m}\large{\backslash}\{1\}$
exactly $(2^m+1)$ times. In this case $\phi_{\al',0}(x)$ defined in
(\ref{def phi}) is $\al' x^{2^m}$. Hence $T(\al',0)=2^n$ or $-2^{m}$
provided that $\al'=0$
 or not. Define
 \[
l_{\ka}=\left\{
\begin{array}{ll}
1,&\text{if}\;\ka=2^n-1\\[1mm]
2^m-2, &\text{if}\;\ka=-2^{m}-1\\[1mm]
0,&\text{otherwise}
\end{array}
\right.
 \]

 Define
\begin{equation}\label{def tk0}
\displaystyle{t_{\ka}^0=\#\left\{\be\in
\bF_q\large{|}\;T(0,\be)=\ka+1\right\}}.
\end{equation}
Then a routine calculation shows that

\[
t_{\ka}^0=\left\{\begin{array}{cl}1, & \ka=2^n-1\\[2mm]
2^n-1, &\ka=-1\;\text{and}\;k/d\;\text{is even}\\[2mm]
\frac{2^d(2^n-1)}{2^d+1},&\ka=2^m-1\;\text{and}\;m/d\;\text{is even}\\[2mm]
\frac{2^n-1}{2^d+1},&\ka=-2^{m+d}-1\;\text{and}\;m/d\;\text{is even}\\[2mm]0,&\text{otherwise}.\end{array}\right.
\]

By Inclusion-Exclusion principle, Lemma \ref{q-2},  (\ref{def tk})
and (\ref{def tk0}) we get
\[
\begin{array}{rcl}
M_{\kappa}(\calF_2,\calF_2)&=&\left(\frac{1}{2^m+1}\right)^2\cdot
(2^m+1)\left[(2^n-2)\left(\frac{2^m-2}{2^m-1}(t_{\ka}-t_{\ka}^0)+t_{\ka}^0\right)+l_{\ka}\right]\\[2mm]
&=&\frac{(2^m-2)(2^n-2)}{2^n-1}t_{\ka}+\frac{2^n-2}{2^n-1}t_{\ka}^0+\frac{1}{2^m+1}l_{\ka}.
\end{array}
\]

  \item For the case $k/d$ is even.  The cross correlation function between
$a_{\be}\in \calF_2$ and $a\in \calF_3$ by a shift $\tau$ ($0\leq
\tau\leq q-2$) is
\[
\begin{array}{ll}
&C_{\be}(\tau)=\sum\limits_{\lambda=0}^{q-2}(-1)^{a_{\be}({\lambda})-
a({\lambda+\tau})}\\[2mm]
&\qquad =\sum\limits_{\lambda=0}^{q-2}(-1)^{\Tra_1^m(
\pi^{\lambda(2^m+1)})+\Tra_1^n(\be
  \pi^{\lambda(2^k+1)})-\Tra_1^n(
  \pi^{(\lambda+\tau)(2^k+1)})}\\[2mm]
  &\qquad = T(1,\be')-1
  \end{array}
\]
 where
 $
 \be'=\be-\pi^{\tau(2^k+1)}.
$

When $\be$ runs through
$\bF_{q}^*$ and $\tau$ takes value from $0$ to $q-2$,
$\be'$ runs through $\bF_q$ exactly $q-2$ times except for $0$, on which $\be'$ has frequency $q-1$.

Define
\[t_{\ka}^1=\#\{\be\in \bF_q\;|\;T(1,\be)=\ka+1\}.\]
 Then from Lemma (\ref{q-2}) we have
\[t_{\ka}^1=\frac{t_{\ka}-t_{\ka}^0}{2^m-1}.\]

 Note that $T(1,0)=-2^m$. By  Lemma \ref{q-2} we get
\[\begin{array}{rcl}M_{\kappa}(\calF_2,\calF_3)&=&M_{\kappa}(\calF_3,\calF_2)=\frac{1}{2^m+1}\left((q-2)t_{\ka}^1+\delta(\ka,0)\right)\\[2mm]
&=&\frac{2^n-2}{2^n-1}\left(t_{\ka}-t_{\ka}^0\right)+\frac{1}{2^m+1}\delta(\ka,-2^m-1)
\end{array}
\] where the
Hermitian symbol $\delta(\ka,-2^m-1)=1$ if $\ka=-2^m-1$, and zero
otherwise.

  \item For the case $k/d$ is even.  The auto-correlation function of
$a\in\calF_3$ by a shift $\tau$ ($0\leq
\tau\leq q-2$) is
\[
\begin{array}{ll}
&C(\tau)=\sum\limits_{\lambda=0}^{q-2}(-1)^{a({\lambda})-
a({\lambda+\tau})}\\[2mm]
&\qquad =\sum\limits_{\lambda=0}^{q-2}(-1)^{\Tra_1^n(
  \pi^{\lambda(2^k+1)})-\Tra_1^n(
  \pi^{(\lambda+\tau)(2^k+1)})}\\[2mm]
  &\qquad = T(0,1-\pi^{\tau(2^k+1)})-1.
  \end{array}
\]
Since $\gcd(2^k+1,2^n-1)=1$, $C(\tau)=-1$ for $\tau\neq 0$ and
$C({0})=q-1$. Then
\[M_{\ka}(\calF_3,\calF_3)=\left\{
\begin{array}{ll}
q-2, &\text{if}\;\ka=-1\\[2mm]
1,&\text{if}\;\ka=q-1\\[2mm]
0,&\text{otherwise}.
\end{array}
\right.
\]
In total, sum up all the $M_{\ka}(\calF_i,\calF_i)$ for $1\leq
i,j\leq 3$ and the result follows from Theorem \ref{value dis T} and
Theorem \ref{value dis S}.
\end{itemize}
\end{proof}
\begin{remark}
The case $d'=d=1$ has been shown in \cite{Zen Liu}, Theorem 23 and
25.
\end{remark}

\section{Conclusion}

\quad In this paper we have studied the exponential sums
$\sum\limits_{x\in \bF_q}(-1)^{\Tra_1^m (\alpha
x^{2^{m}+1})+\Tra_1^n(\beta x^{2^k+1})}$ and $\sum\limits_{x\in
\bF_q}(-1)^{\Tra_1^m (\alpha x^{2^{m}+1})+\Tra_1^n(\beta
x^{2^k+1}+\ga x)}$ with $\al\in \bF_{2^m},(\be,\ga)\in \bF_q^2$.
After giving the value distribution of $\sum\limits_{x\in
\bF_q}\zeta_p^{\Tra_1^m (\alpha x^{2^{m}+1})+\Tra_1^n(\beta
x^{2^k+1})}$ and $\sum\limits_{x\in \bF_q}(-1)^{\Tra_1^m (\alpha
x^{2^{m}+1})+\Tra_1^n(\beta x^{2^k+1}+\ga x)}$, we determine the
correlation distribution among a family of sequences, and the weight
distributions of the cyclic codes $\cC_1$ and $\cC_2$. These results
generalize  \cite{Kasa1},\cite{Vand2} and \cite{Zen Liu}.

\section{Acknowledgements}
\quad The authors will thank the anonymous referees for their
helpful comments.

\end{document}